\providecommand{\sortnoop}[1]{}
\mathchardef\ls="213C   % less symbol (< used as \langle)
\mathchardef\gr="213E   % greater symbol (> used as \rangle)
\mathchardef\uparrow="0222      % adaptation mathmode of uparrow
\mathchardef\downarrow="0223    % adaptation mathmode of downarrow
\newcommand{\blankline}{\vspace*{0.5\baselineskip}}
\newcommand{\quarterlineup}{\vspace*{-0.25\baselineskip}}
\newcommand{\nop}[1]{}
\newcommand{\FExp}{\mathbb{B}( \mkern1mu \calF \mkern1mu )}
\newcommand{\FExphat}{\widehat{\FExp\rule{0pt}{9pt}}}
\renewcommand{\FExphat}{\widehat{\mathbb{B}}( \mkern1mu \calF \mkern1mu )}
\newcommand{\TRUE}{\textit{true}}
\newcommand{\TRUEhat}{\widehat{\TRUE}}
\newcommand{\FALSE}{\textit{false}}
\newcommand{\Trans}{\, \mathrel{\xRightarrow{\ }} \,}
\newcommand{\assign}{\mathrel{{:}{=}}}
\newcommand{\bbisim}{\simeq_{b}}
\newcommand{\bbisimP}{\simeq^{b}_{P}}
\renewcommand{\bbisimP}{\simeq_{P}}
\newcommand{\bfbisim}{\simeq_{\mathit{bf}}}
\newcommand{\bftwo}{\textbf{2}}
\newcommand{\calAf}{\calA_{\mkern-2mu f}}
\newcommand{\calAtau}{\calA_\tau}
\newcommand{\calBmin}{\calB_{\mathit{min}}}
\newcommand{\calSmin}{\calS_{\mathit{min}}}
\newcommand{\fTrans}[1]{\, \xRightarrow{#1} \,}
\newcommand{\cbfbisim}{\simeq_{\mathit{cbf}}}
\newcommand{\ftrans}[2]{\, \xrightarrow{{#1}|\mkern1mu{#2}} \,}
\newcommand{\ftranss}[1]{\, \xrightarrow{{#1}}\hspace{-14pt}\longrightarrow \,}
\newcommand{\lc}{\mathopen{\lbrace \;}}
\renewcommand{\ftranss}[1]{\fTrans{#1}}
\newcommand{\notsimP}{\not\sim_\calP}
\newcommand{\ntrans}[2]{\, \stackrel{#1}{\nrightarrow}_{#2} \,}
\newcommand{\nTrans}{\, \nRightarrow \,}
\newcommand{\rc}{\mathclose{\; \rbrace}}
\newcommand{\shat}{\hat{s}}
\newcommand{\simP}{\sim_\calP}
\newcommand{\singleton}[1]{\lbrace {#1} \rbrace}
\newcommand{\sinit}{s^{\mbox{\tiny $\circ$}}}
\renewcommand{\sinit}{s_{\mkern-1mu {\ast}}}
\newcommand{\that}{\hat{t}}
\newcommand{\trans}[2]{\, \xrightarrow{#1}_{#2} \,}
\newcommand{\transP}[1]{\trans{#1}{P}}
\newcommand{\varphihat}{\hat{\varphi}}
\newcommand{\calA}{\mathscr{A}}
\newcommand{\calB}{\mathscr{B}}
\newcommand{\calC}{\mathscr{C}}
\newcommand{\calF}{\mathscr{F}}
\newcommand{\calG}{\mathscr{G}}
\newcommand{\calP}{\mathscr{P}}
\newcommand{\calS}{\mathscr{S}}
\newcommand{\calT}{\mathscr{T}}
\newcommand{\calU}{\mathscr{U}}
\newtheorem{theorem}{Theorem}
\newtheorem{lemma}[theorem]{Lemma}
\newtheorem{definition}[theorem]{Definition}
\newtheorem{corollary}[theorem]{Corollary}
\title{Coherent branching feature bisimulation}
\author{%
  Tessa Belder
  \institute{TU/e, Eindhoven\\ The Netherlands}
  \and
  Maurice H. ter Beek
  \institute{ISTI--CNR, Pisa\\ Italy}
  \and
  Erik P. de Vink\thanks{Corresponding author, email~\url{evink@win.tue.nl}.}
  \institute{TU/e, Eindhoven \& CWI, Amsterdam\\ The Netherlands}
}
\begin{document}

\maketitle

\begin{abstract} 
  Progress in the behavioral analysis of software product lines at the
  family level benefits from further development of the underlying
  semantical theory. Here, we propose a behavioral equivalence for
  feature transition systems (FTS) generalizing branching bisimulation
  for labeled transition systems (LTS)\@. We prove that branching
  feature bisimulation for an FTS of a family of products coincides
  with branching bisimulation for the LTS projection of each the
  individual products. For a restricted notion of coherent branching
  feature bisimulation we furthermore present a minimization algorithm
  and show its correctness. Although the minimization problem for
  coherent branching feature bisimulation is shown to be intractable,
  application of the algorithm in the setting of a small case study
  results in a significant speed-up of model checking of behavioral
  properties.
\end{abstract}

%% file intro.tex

\section{Introduction}
\label{sec-intro}

Notions of behavioral equivalence, like bisimulation, play an
important role in the analysis of large systems in general and thus of
(software) product lines in particular. Abstractions based on
behavioral equivalences compress, via abstraction operations and
minimization algorithms, a model's state space prior to
verification. Subsequently, verification can be done in less time,
using less memory.

Compared to single system verification, SPLE adds variability as yet
another dimension to the complexity of behavioral analysis. In
general, the number of possible products of a product line is
exponential in the number of features. This calls for dedicated
modeling and analysis techniques that allow to specify and reason
about an entire product line at once. In this paper we consider the
model of feature transition systems~\cite{CHSLR10:icse,CCHSLR13},
which facilitates efficient family-based verification. Dedicated
techniques generally use variability knowledge about valid feature
configurations to deduce results for products from a family model, as
opposed to enumerative product-based verification, in which every
product is examined individually. For example, in~\cite{CCPSHL12}
behavioral pre-orders of FTS are given with respect to specific
products to define abstractions based on simulation quotients that
preserve LTL properties. We refer to~\cite{TAKSS14} for an overview of
verification strategies in SPLE and the trade-off of product-based
vs.\ family-based analysis.
%In~\cite{ALF14} it is consequently used to define the featured 
%behavioral distance between FTSs in terms of the missing
%behavior (transitions) that causes them not to be bisimilar.

In~\cite{BV14:formalise,BV14b} we applied tailored property preserving
reductions to a product line modeled with mCRL2~\cite{CGK+13:tacas}
and we verified by means of model checking a number of behavioral
properties of the product line. The mCRL2 toolset provides specific
support for reduction modulo branching
bisimulation~\cite{GW96:jacm}. This led us to investigate a
feature-oriented notion of branching bisimulation inspired by the
research reported in~\cite{CCPSHL12} (which focuses on a notion of
simulation). In this paper, we propose a definition of what is coined
branching feature bisimulation, extending the definition
in~\cite{GW96:jacm}, and we seek to adapt the efficient algorithm
of~\cite{GV90:icalp} to compute, given an FTS, a minimal FTS that is
branching feature bisimilar.

In our pursuit to transfer the results of~\cite{CCPSHL12} to the case
of branching bisimulation, a number of issues arises due to the
presence of feature expressions, though. One such issue for FTS is
that minimization in the number of states is not the same as
minimization in the number of transitions, a situation that does not
occur with LTS\@. Our effort here is to reduce in the number of
states. In order to make our minimization algorithm work, we
  restrict to so-called coherent rather than arbitrary branching
  feature bisimulation relations.  We will prove that our algorithm
reduces an FTS~$\calS$ to a minimal FTS~$\calSmin$ for which there
exists a coherent branching feature bisimulation relation for $\calS$
and~$\calSmin$. Moreover, no smaller FTS~$\calS'$ exists
  which is also coherent branching feature bisimilar to~$\calS$.
However, as we will argue by a reduction of graph coloring, the
minimization problem is NP-complete for coherent branching feature
bisimulations (and we suspect this is the case for branching
  feature bisimulation as well). Still, as an evaluation of the approach
for a relatively small toy example illustrates, overall a substantial
reduction in computation time is achieved for bisimulation-enhanced
family-based analysis as compared to enumerative product-based
analysis. In particular, for properties involving a limited number of
features, verification time using the family FTS is only a third to a
quarter of the time needed to verify all product LTS\@.

%In SPLE, behavioral equivalences are moreover used to 
%define the conformance of fully configured product behavior 
%to family models with optional behavior to reflect variability. 
%Examples are approaches based on modal transition 
%systems (MTS)~\cite{FUB06,ABFG12} and process 
%algebras~\cite{Tri14,BV14b}.
Behavioral equivalences also form the basis of conformance 
notions as used for model elaboration by iterative refinement of 
partial behavioral models. In SPLE, this allows to relate fully 
configured product behavior to family models with optional 
behavior reflecting product variability. Examples are approaches 
based on process algebra~\cite{Tri14} and on modal transition 
systems (MTS)~\cite{FUB06,ABFG11b,ABFG12}. In~\cite{Tri14}, 
a so-called variant process algebra is introduced, which allows to 
model family behavior that subsumes the behavior of all possible 
product variants. Special-purpose bisimulation relations then 
allow to compare variants among each other and against the 
family.
%One such model, modal transition systems~\cite{LT88}, 
%has been recognized as a means for modeling SPL
%behavior~\cite{FUB06,FG07,AHLNW08,ABFG12}.
%%
% An MTS, on the other hand, is basically an LTS that distinguishes
% between admissible and necessary transitions~\cite{LT88}.
In SPLE, MTS are one of the models used to specify family behavior
encompassing all possible product behavior, represented by those LTS
that are implementations of the MTS (obtained by refinement of
admissible behavior). In~\cite{FUB06}, weak and strong refinement for
MTS as defined in~\cite{LT88} (based on weak and strong bisimulation)
are shown to be inadequate for applications in SPLE (mainly due to the
lack of support for unobservable actions and for preserving branching
behavior, respectively) and a novel notion of refinement is introduced
preserving the branching structure. It moreover preserves properties
expressed in $3$-valued weak $\mu$-calculus. However, its definition
is not operational and algorithms for conformance checking conformance
are thus infeasable.

The paper outline is as follows.  Building on definitions and an
algorithm for branching bisimulation of LTS reviewed in
Section~\ref{sec-branching-bisim}, we introduce in
Section~\ref{sec-branching-feature-bisim} the notion of branching feature
bisimulation and show its soundness for branching bisimulation with
respect to all products. The algorithm for minimizing modulo coherent
branching feature bisimulation is given in
Section~\ref{sec-algorithm}, which also provides an NP-completeness
proof for the minimization problem. A validation of the approach,
based on a toy example of a product line of coffee/soup vending machines
is reported in Section~\ref{sec-experiments}. Finally,
Section~\ref{sec-conclusion} briefly wraps up with concluding remarks
and future work.

%% file branching-bisim.tex

\section{Branching bisimulation for labeled transition systems}
\label{sec-branching-bisim}

Strong bisimulation is a cornerstone of the theory of LTS~\cite{Mil89}, but is
often too fine a behavioral equivalence for verification
purposes. Application of its minimization algorithm typically reduces
the system under verification only in a limited way.  Having this in
mind, various weaker notions have been studied in the
literature~\cite{Gla90:concur,Gla93:concur}. In the context of model
checking, branching bisimulation as proposed for LTS by Van Glabbeek
\& Weijland enjoys a number of appealing
properties~\cite{GW89:ifip}. We recall and illustrate its definition,
and discuss the outline of a minimization algorithm that returns the
smallest LTS that is branching bisimilar to a given one. To this end,
we fix an alphabet of actions $\calA$, distinguish a symbol
$\tau\notin\calA$, referred to as the silent action, and let $\calAtau
= \calA \cup\singleton{\tau}$.

\blankline

\begin{definition}
  \label{def-branching-bisim}
  A labeled transition system is a triple $\calS = ( S ,\,
  {\rightarrow} ,\, \sinit )$ with set of states~$S$, transition
  relation ${\rightarrow} \subseteq {S \times \calAtau \times S}$, and
  initial state~$\sinit \in S$.
  \begin{itemize}
  \item [(a)] For $s,s' \in S$, we write
    $s \Trans s'$ if\, $\exists \mkern1mu n\,\exists \mkern1mu s_0
    \cdots s_n \colon s_0 = s \land \bigl( \forall i ,\, 1 \leqslant i
    \leqslant n \colon s_{i-1} \trans{\,\tau\,}{} s_i \bigr) \land s_n = s'$.
  \item [(b)] A symmetric relation~$R \subseteq {S \times S}$ is
    called a branching bisimulation relation if $\forall s,s',t \in
    S$, $\alpha \in \calAtau$ such that $R(s,t)$ and $s
    \trans{\alpha}{} s'$, it holds that $R(s,\that \mkern1mu )$,
    $R(s',t')$ and $t \Trans \that \trans{(\alpha)}{} t'$ for some
    $\that, t' \in S$.
  \item [(c)] Two states $s,t$ of~$\calS$ are called branching
    bisimilar if $R(s,t)$ for some branching bisimulation
    relation~$R$. Notation~$s \bbisim t$.
  \end{itemize}
\end{definition}

\blankline

\noindent
Note the notation $\that \trans{(\alpha)}{} t'$ used in part~(b) of
this definition. Following~\cite{GW96:jacm}, we have
$\that\trans{(\alpha)}{} t'$ if either $\that \trans{\alpha}{} t'$ or
$\alpha = \tau$ and~$\that = t'$, an elegant trick to allow the
transition $s \trans{\tau}{} s'$ to be matched by $t = \that = t'$,
i.e.\ by no transition for~$t$ in case~$R(s',t)$.

In Figure\,\ref{fig-examples-branching-bisim} at the left-hand side,
$s_0$ and $t_0$ are not branching bisimilar:\,Clearly state $s_1$ is
not branching bisimilar to state~$t_0$ since $s_1$ has no
$b$-transition. But then, the transition $t_0 \trans{a}{} t_2$ cannot
be matched by the transition sequence $s_0 \Trans s_1 \trans{a}{} s_2$
because the intermediate state~$s_1$ cannot be related to state~$t_0$,
as specifically required by the definition. However, for $u_0$
and~$v_0$ at the right-hand side, the transition $v_0 \trans{a}{}
v_1$ can be matched by $u_0 \Trans u_1 \trans{a}{} u_4$, since in this
case $v_0$ and~$u_1$ are branching bisimilar. It is noted that
  $u_0$ and~$v_0$, but also $s_0$ and~$t_0$, are weakly bisimilar in
  the sense of Milner~\cite{Mil89}.

\begin{figure}[h]
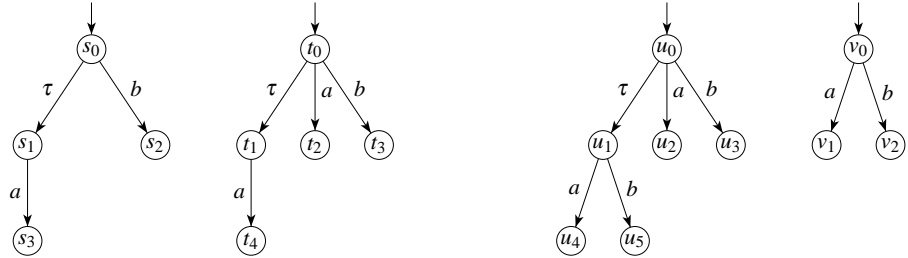

  \centering
  \small
  \hspace*{0.5cm}
  \scalebox{0.85}{%
  \begin{digraph}
    (200,40)(0,12.5)
    \graphset{Nadjust=n,Nw=4.5,Nh=4.5}
    \graphset{AHLength=2,AHangle=20}
    \node(s0)(25,45){$s_0$}
    \node(s1)(15,30){$s_1$}
    \node(s2)(35,30){$s_2$}
    \node(s3)(15,15){$s_3$}
    \imark[iangle=90](s0)
    \edge[ELside=r](s0,s1){$\tau$}
    \edge(s0,s2){$b$}
    \edge[ELside=r](s1,s3){$a$}
    \put(35,0){%
    \node(t0)(25,45){$t_0$}
    \node(t1)(15,30){$t_1$}
    \node(t2)(25,30){$t_2$}
    \node(t3)(35,30){$t_3$}
    \node(t4)(15,15){$t_4$}
    \imark[iangle=90](t0)
    \edge[ELside=r](t0,t1){$\tau$}
    \edge[ELpos=45](t0,t2){$a$}
    \edge(t0,t3){$b$}
    \edge[ELside=r](t1,t4){$a$}
    } %% end put
    \put(90,0){%
    \node(u0)(25,45){$u_0$}
    \node(u1)(15,30){$u_1$}
    \node(u2)(25,30){$u_2$}
    \node(u3)(35,30){$u_3$}
    \node(u4)(10,15){$u_4$}
    \node(u5)(20,15){$u_5$}
    \imark[iangle=90](u0)
    \edge[ELside=r](u0,u1){$\tau$}
    \edge[ELpos=45](u0,u2){$a$}
    \edge(u0,u3){$b$}
    \edge[ELside=r](u1,u4){$a$}
    \edge(u1,u5){$b$}
    } %% end put
    \put(120,0){%
    \node(v0)(25,45){$v_0$}
    \node(v1)(20,30){$v_1$}
    \node(v2)(30,30){$v_2$}
    \imark[iangle=90](v0)
    \edge[ELside=r](v0,v1){$a$}
    \edge(v0,v2){$b$}
    } %% end put
  \end{digraph}
  } %%
  \caption{\label{fig-examples-branching-bisim}Two non-branching bisimilar states and two branching bisimilar states}
\end{figure}

\noindent
An efficient minimization algorithm for branching bisimulation is due
to Groote \& Vaandrager~\cite{GV90:icalp}, based on the partition
refinement algorithm of Paige \& Tarjan~\cite{PT87:siam}.  It involves
the notions of a partition of the set of states, and of a
splitter: Consider a finite LTS $\calS = ( S , {\trans{}{}} , \sinit
)$ over the action set~$\calAtau$.
\begin{itemize}
\item A partition of~$\calS$ is a collection $\calB = \lc B_i \mid
  i \in I \rc$ of subsets of~$S$ that disjointly covers~$S$, i.e.\,$\bigcup_{i \in I} \: B_i =\linebreak S$, and $B_i \cap B_j = \varnothing$ if $i
  \neq j$, for all $i,j \in I$. The elements of a partition are
  referred to as blocks.
  % \item A partition~$\calB'$ of~$\calS$ is called a refinement of
  %   a partition~$\calB$ of~$\calS$ is each block of~$\calB'$ is
  %   contained in a block of~$\calB$.
\item For a partition~$\calB$, blocks~$B,B' \in \calB$, and
  $\alpha \in \calAtau$ we let $\mathit{pos}_\alpha(B,B')$ $=$ $\lc s
  \in B \mid \exists \mkern1mu \shat \in B \, \exists \mkern1mu s' \in
  B' \colon s \, \Trans \shat \trans{\alpha}{} s' \rc$, and
  $\mathit{neg}_\alpha(B,B')$ $=$ $\lc s \in B \mid \forall \shat \in
  B \mkern1mu \, \forall s' \in B' \mkern1mu \colon (s \nTrans
  \shat)\,\lor\,(\shat \ntrans{\alpha}{} s') \rc$.
\item For blocks~$B,B'$ of a partition~$\calB$, the block~$B'$ is
  called a splitter of~$B$ for an action $\alpha \in \calAtau$ if both
  $\mathit{pos}_\alpha(B,B') \neq \varnothing$ and
  $\mathit{neg}_\alpha(B,B') \neq \varnothing$.
\end{itemize}
A simplified version of the algorithm of~\cite{GV90:icalp} for
minimization modulo branching bisimulation starts with the trivial
partition~$\calB = \singleton{S}$ and iterates
\begin{center}
  \textbf{while} 
  splitter $B'$ of block $B \in \calB$ for~$\alpha \in \calAtau$ exists 
  \textbf{do}
  $\calB \assign ( \calB {\setminus}
  \singleton{B} ) \, \cup \, \lbrace \mkern1mu \mathit{pos}_\alpha(B,B') ,\,
  \mathit{neg}_\alpha(B,B') \mkern1mu \rbrace$ 
  \textbf{end}
\end{center}
Thus, starting from the trivial partition $\singleton{S}$, having the
complete set of states~$S$ as a single block, we keep refining the
partition based on a splitter. Clearly, the algorithm terminates for a
finite LTS in at most ${|}S{|}$ many steps. We refer
to~\cite{GV90:icalp} for a proof of the following result.

%\blankline

\begin{theorem}
  \label{th-correctness-branching-bisim}
  Assume $\calBmin$ is the partition obtained upon termination
  after applying the algorithm to the LTS $\calS = ( S ,\,
  {\trans{}{}} ,\, \sinit )$. Define the LTS $\calS_{\mathit{min}} = (
  \calBmin ,\, {\rightarrow_{\mathit{min}}} ,\, B_{\mkern-1mu \ast} )$
  by letting $B \xrightarrow{\alpha}_\mathit{min} B'$ if there exist
  $s \in B$, $s' \in B'$ such that $s \trans{\alpha}{} s'$ for $B,B'
  \in \calB$, $\alpha \in \calAtau$ with $B \neq B'$ or $\alpha \neq
  \tau$, and by choosing $B_{\mkern-1mu \ast}$ such that $\sinit \in
  B_{\mkern-1mu \ast} $. Then $\calS_{\mathit{min}}$ is the smallest
  LTS that is branching bisimilar to~$\calS$.  \qed
\end{theorem}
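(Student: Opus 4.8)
The plan is to prove two things: first that $\calS_{\mathit{min}}$ is branching bisimilar to $\calS$, and second that it is the smallest such LTS. The key structural fact driving everything is that $\calBmin$ is a \emph{stable} partition, meaning no block has a splitter — this is exactly the termination condition of the \textbf{while} loop. I would organize the argument around the claim that, upon stability, the relation $R = \lc (s,t) \mid s,t \text{ lie in the same block of } \calBmin \rc$ is a branching bisimulation on $\calS$, and that the collapsed system $\calS_{\mathit{min}}$ faithfully records the quotient of $\calS$ under $R$.

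For the \textbf{soundness} direction, I would first prove an internal stability lemma: if no block $B'$ splits $B$ for any $\alpha$, then for every block $B'$ and action $\alpha$ either $\mathit{pos}_\alpha(B,B') = B$ or $\mathit{neg}_\alpha(B,B') = B$; that is, either \emph{every} state of $B$ can reach a $B'$-state by a $\Trans \trans{\alpha}{}$ step (through $B$) or \emph{none} can. I would then verify the branching bisimulation transfer condition of Definition~\ref{def-branching-bisim}(b) for $R$: given $R(s,t)$ and $s \trans{\alpha}{} s'$, the states $s,t$ share a block $B$, and $s'$ lies in some block $B'$; stability forces $t \in \mathit{pos}_\alpha(B,B')$, yielding $t \Trans \that \trans{(\alpha)}{} t'$ with $\that \in B$ and $t' \in B'$, so $R(s,\that)$ and $R(s',t')$ as required. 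The silent-self-loop subtlety — the $\trans{(\alpha)}{}$ notation when $\alpha = \tau$ and $B = B'$ — must be handled exactly as the paper's remark after the definition describes, since an internal $\tau$-step inside a block need not be matched by any move. A parallel (and easier) argument shows that relating each state $s$ to the block $[s]$ containing it is a branching bisimulation between $\calS$ and $\calS_{\mathit{min}}$, using the way $\rightarrow_{\mathit{min}}$ is defined, together with the side condition $B \neq B'$ or $\alpha \neq \tau$ that suppresses the spurious self-loops.

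For \textbf{minimality}, I would argue that $\calBmin$ is the \emph{coarsest} stable partition, i.e.\ it refines no further than necessary. The standard approach is to show that the partition into $\bbisim$-equivalence classes of $\calS$ is itself stable, and then prove by induction on the loop iterations that every partition $\calB$ produced by the algorithm is \emph{refined by} the $\bbisim$-classes: each splitting step separates states that are genuinely not branching bisimilar, so no two $\bbisim$-equivalent states are ever placed in different blocks. Since the terminal partition $\calBmin$ is stable and is refined by the $\bbisim$-classes, while the $\bbisim$-classes are themselves a stable partition refined by $\calBmin$ (by the soundness direction), the two partitions coincide. Hence the states of $\calS_{\mathit{min}}$ are exactly the $\bbisim$-classes, and no branching-bisimilar LTS can have fewer states.

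\textbf{The main obstacle} I anticipate is the minimality half, specifically establishing that distinct blocks of $\calBmin$ correspond to distinct $\bbisim$-classes — the converse inclusion is where the real content lies. Showing that the splitter-based refinement never over-separates requires a careful invariant argument that each split respects $\bbisim$, and the branching (as opposed to strong) setting complicates this because a splitter detects reachability via $\Trans$ through a block, so one must confirm that the $\mathit{pos}/\mathit{neg}$ distinction is invariant across $\bbisim$-equivalent states. Verifying this invariant — that branching-bisimilar states always agree on $\mathit{pos}_\alpha(B,B')$ membership — is the crux, and it is precisely the point at which the paper defers to the detailed proof in~\cite{GV90:icalp}.
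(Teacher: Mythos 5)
The paper gives no proof of this theorem at all---it defers entirely to Groote \& Vaandrager~\cite{GV90:icalp}---and your outline is precisely the standard correctness argument from that reference: stability of the terminal partition makes the induced same-block relation a branching bisimulation (soundness), while the loop invariant that no split ever separates $\bbisim$-equivalent states shows $\calBmin$ coincides with the partition into $\bbisim$-classes (minimality). You also correctly locate the crux---that $\bbisim$-equivalent states agree on $\mathit{pos}_\alpha(B,B')$ membership once all blocks are unions of $\bbisim$-classes, which relies on the stuttering property and on excluding the inert case $B'=B$, $\alpha=\tau$ from the splitter definition (a condition the paper's simplified presentation leaves implicit but which your treatment of the $\trans{(\alpha)}{}$ self-loop quietly supplies)---so the proposal is a sound blueprint matching the proof the paper points to.
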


\blankline

\noindent
In the simplified algorithm sketched above, major part of the
computation is spent on unfolding of the relation~$\Trans$. The
algorithm of~\cite{GV90:icalp} reduces this by eliminating
$\tau$-cycles and by keeping track, per block, of so-called bottom
states. The complexity of the Groote \& Vaandrager algorithm is $O(m
\log m + m \cdot n)$, with~$n$ the number of states and~$m$ the number
of transitions. Typically, for an LTS~$m \ll n^2$. It is known that
branching bisimulation preserves the fragment of the modal
$\mu$-calculus consisting of CTL${}^\ast$ minus the next
operator~\cite{DV95:jacm}. Therefore, exploiting this fact in
practical situations, significant reduction of the state space and
corresponding speed-up of subsequent verification can be obtained by
applying hiding of action followed by the minimization algorithm for
branching bisimulation. 

In the sequel of this paper, we seek to apply the idea of
  branching bisimulation (i.e.\ allowing silent moves through bisimulation
  equivalent states but through no other) and its minimization
  techniques to the setting of FTS, where not only actions but also
  feature expressions decorate the transitions.

%% file sfb.tex

\section{Branching bisimulation for feature transition systems}
\label{sec-branching-feature-bisim}

We fix a finite non-empty set~$\calF$ of features, a subset
$\calP\subseteq \bftwo^\calF$ of products, and again a set~$\calAtau$
including the silent action~$\tau$. We let~$\FExp$ denote the set of
boolean expressions over~$\calF$. We refer to elements of~$\FExp$ as
feature expressions. For a product~$P\in \calP$, we use~$\chi(P)$ to
denote its characteristic formula. The notion of a feature transition
system (FTS) was proposed in~\cite{CHSLR10:icse}.

\blankline

\begin{definition}
  A feature transition system (FTS) $\calS$ is a triple $\calS = ( S
  ,\, \theta ,\, \sinit )$, with $S$ the set of states, $\theta : S
  \times \calAtau \times S \to \FExp$ the transition constraint
  function, and~$\sinit \in S$ the initial state.
\end{definition}

\blankline

\noindent
For states~$s, s' \in S$, an action~$\alpha \in \calAtau$ and a
satisfiable feature expression $\psi \in \FExp$, we write $s
\ftrans{\alpha}{\psi} s'$ if $\theta(s,\alpha,s') = \psi$. We say that
a product~$P \in \calP$ satisfies a feature expression~$\varphi \in
\FExp$ if $\varphi$~is valid when the boolean variables corresponding
to the features of~$P$ are assigned the value~$\TRUE$ and those not
in~$P$ the value~$\FALSE$, denoted by $P \models \varphi$. The
equivalence relation~$\simP$ on~$\FExp$ is given by $\varphi \simP
\psi$ iff $\forall P \in \calP$: $P \models \varphi \Leftrightarrow P
\models \psi$. We let $\FExphat = \FExp / {\simP}$. For an FTS $\calS
= ( S ,\, \theta ,\, \sinit )$, we define the reachability
function~$\varrho : S \to \FExp$ for~$\calS$ to be such that
\begin{displaymath}
  \forall P \in \calP \colon
  P \models \varrho(s) \ \ \text{iff} \ \
  \begin{array}[t]{@{}l}
    \exists \mkern1mu n
    \exists \mkern1mu s_0 \cdots s_n
    \exists \mkern1mu \alpha_1 \cdots \alpha_n
    \exists \mkern1mu \psi_1 \cdots \psi_n
    \colon
    \\
    s_0 = \sinit \land
    ( \forall i , 1 \leqslant i \leqslant n : 
    s_{i-1} \ftrans{\alpha_i}{\psi_i} s_{i} \land
    P \models \psi_i ) \land
    s_n = s
  \end{array}
\end{displaymath}
for all $s \in S$.  
% In the sequel we will also consider the sum $\calS \oplus \calS'$ of
% two FTS, say $\calS = ( S ,\, \theta ,\, \sinit )$ and $\calS' = (
% S' ,\, \theta' ,\, \sinit' )$. Slightly asymmetrically, we define
% $\calS \oplus \calS' = ( S \oplus S' ,\, \theta \oplus \theta' ,\,
% \sinit )$, where $S \oplus S'$ is the disjoint union of $S$
% and~$S'$, and $(\theta \oplus \theta')(s_1,\alpha,s_2)$ is set to
% $\theta(s_1,\alpha,s_2)$ if $s_1,s_2 \in S$, to
% $\theta'(s_1,\alpha,s_2)$ if $s_1,s_2 \in S'$, and to~$\FALSE$
% otherwise.
We note that, for the ease of presentation in this paper, the
definition of an FTS above is slightly more abstract compared to the
original definition given in~\cite{CHSLR10:icse}.

Next, we introduce a notion of branching feature bisimulation for FTS,
generalizing the notion of branching bisimulation given by
Definition~\ref{def-branching-bisim} for LTS\@.

\blankline

\begin{definition}
  \label{df-branching-feature-bisim}
  Let $\calS = ( S ,\, \theta ,\, \sinit )$ and $\calS' = ( S' ,\,
  \theta' ,\, \sinit' )$ be two FTS\@.
  \begin{itemize}
  \item [(a)] For $s,s' \in S$, and satisfiable~$\eta \in \FExp$, we
    write $s \ftranss{\eta} s'$ if\
    \begin{math}
      \exists \mkern1mu n \,
      \exists \mkern1mu s_0 , \ldots , s_n \,
      \exists \mkern1mu \eta_1 , \ldots , \eta_n \colon 
      s = s_0 \land\
      \forall \mkern1mu i, 1 \leqslant i \leqslant n \colon 
      s_{i-1} \ftrans{\tau}{\eta_i} s_{i} \land
      s' = s_n \land 
      \eta = \textstyle{\bigwedge_{1 \leqslant i \leqslant n}} \:
      \eta_i .
    \end{math}
    Furthermore, we write $s \ftrans{(\alpha}{\psi)} s'$ in case $s
    \ftrans{\alpha}{\psi} s'$ or $\alpha = \tau \land s = s' \land
    \psi = \TRUE$.
  \item [(b)] A symmetric relation $R \subseteq S \times \FExphat
    \times S$ is called a branching feature bisimulation relation
    for~$\calS$ if for $s,t \in S$, $\alpha \in \calAtau$ such that
    $R(s ,\, \varphihat ,\, t)$ the so-called transfer condition
    holds:
    \begin{displaymath}
      \def\arraystretch{1.2}
      \begin{array}{l}
        s \ftrans{\alpha}{\psi} s' 
        \quad \textit{implies} \quad
        %% {} \\ \qquad
        \exists \mkern1mu n \,
        \exists \mkern2mu \that_1 , \ldots , \that_n \,
        \exists \mkern2mu t'_1 , \ldots , t'_n \,
        \exists \mkern1mu \eta_1 , \ldots , \eta_n \,
        \exists \mkern1mu \psi_1 , \ldots , \psi_n \,
        \exists \mkern1mu \varphi_1 , \ldots , \varphi_n \,
        \exists \mkern1mu \varphi'_1 , \ldots , \varphi'_n 
        \colon
        {} \\ \qquad\qquad\qquad\qquad\qquad
        \forall \mkern1mu i, 1 \leqslant i \leqslant n \colon
        t \ftranss{\eta_i} \that_i \ftrans{(\alpha}{\psi_i)} t'_i \land
        R(s,\varphihat_i,\that_i) \land R(s',\varphihat'_i,t'_i) 
        \ \text{ and }
        {} \\ \qquad\qquad\qquad\qquad\qquad\qquad
        \forall P \in \calP \colon \ 
        P \models \varphi \land \psi 
        \  \Rightarrow \ 
        P \models
        \textstyle{\bigvee_{1 \leqslant i \leqslant n}} \: 
        \eta_i \land \psi_i \land \varphi_i \land \varphi'_i
      \end{array}
      \def\arraystretch{1.0}
    \end{displaymath}
  \item [(c)] Two states $s,t \in S$ are called branching feature
    bisimilar with respect to~$\calS$ if $R(s ,\,\TRUEhat ,\, t)$ for
    some branching feature bisimulation~$R$ for~$\calS$. Notation $s
    \bfbisim t$.
    % \item A partial function $R : S \to S'$ is called a functional
    %   branching feature bisimulation for $\calS$ and~$\calS'$ if it is a
    %   branching feature bisimulation of the sum of $\calS$ and~$\calS'$
    %   relating $\sinit$ and~$\sinit'$. Notation $\calS \fbfbisim
    %   \calS'$. 
  \item [(d)] A branching feature bisimulation relation~$R$
    for~$\calS$ and $\calS'$ is called coherent if $R(s, \varphihat,
    s')$ implies $\varrho(s) \Rightarrow \varphi$, for all $s \in S$,
    $\varphi \in \FExp$, and~$s' \in S'$. Notation $\calS \cbfbisim
    \calS'$.
  \end{itemize}
\end{definition}

\blankline

\noindent
The specific subset of coherent branching feature bisimulations will
be used as a yardstick of comparison in the minimization algorithm
discussed in Section~\ref{sec-algorithm}. Intuitively, the
  feature expression~$\varrho(s)$ captures all products that can reach
  state~$s$. Coherency requires that $\varphi$~does not exclude part
  of these products. So the \lq products of~$s$\rq{} are not split
  by~$\varphi$, but treated as a coherent set of products.

%\vspace*{-0.15cm}
Figure~\ref{fig-transfer-diagram-branching-feature-bisim} depicts the
general situation for the transfer condition where a transition $s
\ftrans{\alpha}{\psi} s'$ is matched by $n$~transition sequences
from~$t$ in total, viz.\ $t \ftranss{\eta_1} \that_1
\ftrans{(\alpha}{\psi_1)} t'_1$ to $t \ftranss{\eta_n} \that_n
\ftrans{(\alpha}{\psi_n)} t'_n$. Moreover, for a product~$P$ for which
state~$s$ admits the transition labelled~$\alpha$, i.e.\ a product
satisfying the constraint~$\varphi$ derived from~$R$ as well as the
feature expression~$\psi$ derived from the transition, it is required
that state~$t$ provides a related transition sequence labeled~$\alpha$
for this product as well. Thus, for some~$i$, $1 \leqslant i \leqslant
n$, $P$~meets $\eta_i$ and~$\psi_i$, thus can move from~$t$
to~$\that_i$ and~$t'_i$, while $P$ is included by the
constraint~$\varphi_i$ for the relation on~$s$ and~$\that_i$ and by
the constraint~$\varphi'_i$ on~$s'$ and~$t'_i$.

\begin{figure}[h]
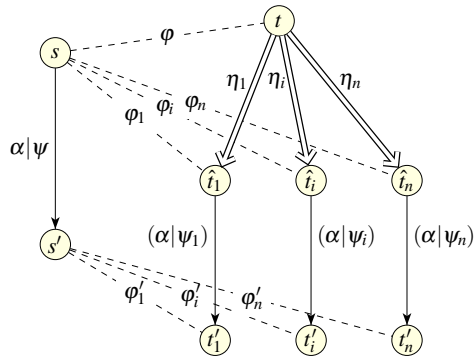

  \centering
  \small
  \scalebox{0.85}{%
  \begin{digraph}(72.5,52.5)(65,5)
    \graphset{Nadjust=n,Nw=4.75,Nh=4.75}
    \graphset{AHLength=2,AHangle=20}
    % \put(60,5){\framebox(75,50){}}
    \graphset{iangle=180,fangle=270} 
    \node[fillcolor=LightYellow](s2)(70,50){$s$}
    \node[fillcolor=LightYellow](s'2)(70,20){$s'$}
    \node[fillcolor=LightYellow](t2)(105,55){$t$}
    \node[fillcolor=LightYellow](t1')(95,30){$\that_1$}
    \node[Nframe=n](t1x')(95.35,30.45){}
    \node[fillcolor=LightYellow](t1'')(95, 5){$t_1'$}
    \node[fillcolor=LightYellow](ti')(110,30){$\that_i$}
    \node[Nframe=n](tix')(109.9,30.7){}
    \node[fillcolor=LightYellow](ti'')(110, 5){$t_i'$}
    \node[fillcolor=LightYellow](tn')(125,30){$\that_n$}
    \node[Nframe=n](tnx')(124.8,30.7){}
    \node[fillcolor=LightYellow](tn'')(125, 5){$t_n'$}
          		
    \gasset{ELdistC=y,ELdist=0,dash={1}{0},AHnb=0}
    \edge(s2,t2){\colorbox{white}{$\varphi$}}
    \edge[curvedepth=0](s2,t1'){\colorbox{white}{$\varphi_1$}}
    \edge[ELside=r,curvedepth=0](s'2,t1''){\colorbox{white}{$\varphi_1'$}}
    \edge[ELpos=43,curvedepth=0](s2,ti'){\colorbox{white}{$\varphi_i$}}
    \edge[ELside=r,ELpos=53](s'2,ti''){\colorbox{white}{$\varphi_i'$}} 
    \edge[ELpos=40,curvedepth=0](s2,tn'){\colorbox{white}{$\varphi_n$}}
    \edge[ELside=r,ELpos=56,curvedepth=0](s'2,tn''){\colorbox{white}{$\varphi_n'$}}

    \gasset{ELdistC=n,ELdist=1,dash={0}1,AHnb=1}
    \edge[ELside=r](s2,s'2){$\alpha | \psi$}
    %% \edge[ELside=r,AHnb=2,ELdist=0,ELpos=45](t2,t1'){$\eta_1$}
    \edge[ELside=r,AHLength=2,AHlength=0,AHangle=45,AHnb=1,ELdist=0.5,ELpos=43,linewidth=1.0](t2,t1x'){$\eta_1$}
    \edge[ELside=r,AHLength=2,AHlength=0,AHangle=45,AHnb=1,linewidth=0.6,linecolor=white](t2,t1x'){}
    \edge[ELside=r,ELpos=35](t1',t1''){%
      $( \mkern-1mu \alpha | \psi_1 \mkern-1mu)$}
    \edge[ELside=r,AHLength=2,AHlength=0,AHangle=45,AHnb=1,ELdist=0.5,ELpos=37,linewidth=1.0](t2,tix'){$\eta_i$}
    \edge[ELside=r,AHLength=2,AHlength=0,AHangle=45,AHnb=1,linewidth=0.6,linecolor=white](t2,tix'){}
    %% \edge[ELside=l,AHnb=2,ELdist=0.5,ELpos=45](t2,ti'){$\eta_i$}
    \edge[ELside=l,ELpos=35](ti',ti''){%
      $( \mkern-1mu \alpha | \psi_i \mkern-1mu )$}
    %% \edge[ELside=l,AHnb=2,ELpos=50,ELdist=0.5](t2,tn'){$\eta_n$}
    \edge[ELside=l,ELpos=50,AHLength=2,AHlength=0,AHangle=45,AHnb=1,ELdist=0.5,ELpos=46,linewidth=1.0](t2,tnx'){$\eta_n$}
    \edge[ELside=r,AHLength=2,AHlength=0,AHangle=45,AHnb=1,linewidth=0.6,linecolor=white](t2,tnx'){}    
    \edge[ELside=l,ELpos=35](tn',tn''){%
      $( \mkern-1mu \alpha | \psi_n \mkern-1mu )$}
  \end{digraph}
  } %%
  \caption{\label{fig-transfer-diagram-branching-feature-bisim}Transfer diagram for branching feature bisimilarity} 
\end{figure}

\noindent
Figure~\ref{fig-ex-strong-feature-bisim} below shows an example of two
FTS (without $\tau$-moves) at the left-hand side. At first sight the
relation $R = \lc (s_0,\TRUEhat,t_0)$, $(s_1,\varphihat_1,t_1)$,
$(s_2,\varphihat_2,t_1)$, $(s_3,\TRUEhat,t_2) \rc$ may look like a
branching feature bisimulation. However, a closer inspection of the
transition $t_1\ftrans{b}{(\varphi_1 \land\, \psi_1) \lor (\varphi_2
  \land\, \psi_2)} t_2$ reveals that this means that we need the
formulas $\varphi_i \land ((\varphi_1 \land \psi_1) \lor (\varphi_2
\land \psi_2)) \Rightarrow \psi_i \land \TRUE$ to hold for $i =
1,2$. However, this only holds when $\varphi_1 \land \varphi_2
\Rightarrow (\psi_1 \Leftrightarrow \psi_2)$; in that case $R$~is
indeed a branching feature bisimulation. Reversely, if a product meets
$\varphi_1 \land \varphi_2 \land \psi_1 \land \neg \mkern2mu \psi_2$,
there will be a transition for~$t_1$ for that product, but not
for~$s_2$ as shown by the two LTS at the right-hand side of
Figure~\ref{fig-ex-strong-feature-bisim}. It is clear that with a
transition from state~$s_0$ to state~$s_2$ but without a transition
between states $s_2$ and~$s_3$, on the one hand, and with a path from
$t_0$ to~$t_2$, on the other hand, the underlying LTS for the two FTS
(and therefore the FTS themselves as we shall see) cannot be
bisimilar.

\begin{figure}[h]
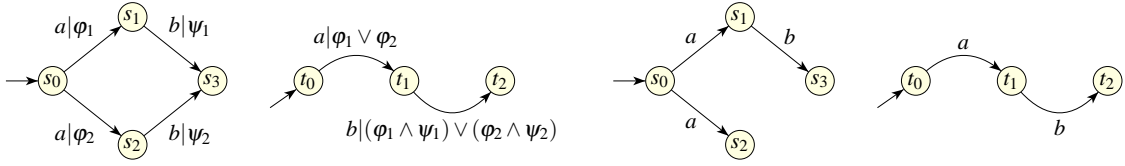

  \centering
  \small
  \scalebox{0.85}{%
  \begin{digraph}(170,22.5)(22.5,30)
    \graphset{Nadjust=n,Nw=4.5,Nh=4.5}
    \graphset{AHLength=2,AHangle=20}
    % \put(5,25){\framebox(110,30){}}
    \graphset{iangle=180,fangle=270}
    \node[fillcolor=LightYellow](s)(25,40){$s_0$}
    \node[fillcolor=LightYellow](t)(37.5,50){$s_1$}
    \node[fillcolor=LightYellow](t')(37.5,30){$s_2$}
    \node[fillcolor=LightYellow](u)(50,40){$s_3$}
    \imark(s)
    \node[fillcolor=LightYellow](s2)(65,40){$t_0$}
    \node[fillcolor=LightYellow](t2)(80,40){$t_1$}
    \node[fillcolor=LightYellow](u2)(95,40){$t_2$}
    \imark[iangle=215](s2)
    \edge[ELside=l](s,t){$a | \varphi_1$}
    \edge[ELside=r](s,t'){$a | \varphi_2$}
    \edge[ELside=l](t,u){$b | \psi_1$}
    \edge[ELside=r](t',u){$b | \psi_2$}
    \edge[ELside=l,curvedepth=4](s2,t2){$a | {\varphi_1 \lor \varphi_2}$}
    \edge[ELside=r,curvedepth=-5](t2,u2){$b | {(\varphi_1 \land \psi_1) \lor
        (\varphi_2 \land \psi_2)}$}
    \put(95,0){%
    \node[fillcolor=LightYellow](s)(25,40){$s_0$}
    \node[fillcolor=LightYellow](t)(37.5,50){$s_1$}
    \node[fillcolor=LightYellow](t')(37.5,30){$s_2$}
    \node[fillcolor=LightYellow](u)(50,40){$s_3$}
    \imark(s)
    \node[fillcolor=LightYellow](s2)(65,40){$t_0$}
    \node[fillcolor=LightYellow](t2)(80,40){$t_1$}
    \node[fillcolor=LightYellow](u2)(95,40){$t_2$}
    \imark[iangle=215](s2)
    \edge[ELside=l](s,t){$a$}
    \edge[ELside=r](s,t'){$a$}
    \edge[ELside=l](t,u){$b$}
    %% \edge[ELside=r](t',u){$b$}
    \edge[ELside=l,curvedepth=4](s2,t2){$a$}
    \edge[ELside=r,curvedepth=-5](t2,u2){$b$}
    } %% end put
  \end{digraph}
  } %%
  \caption{\label{fig-ex-strong-feature-bisim}Bisimilar FTS assuming
    $\varphi_1 \land \varphi_2 \: \Rightarrow \: (\psi_1
    \Leftrightarrow \psi_2)$ and non-bisimilar LTS}
\end{figure}

%\vspace*{-0.1cm}
\noindent
For branching feature bisimulation we have a strict 
correspondence with branching bisimulation for all products 
using the notion of a projection of an FTS. The projection 
results in an~LTS\@.

\blankline

\begin{definition}    
  Given an FTS~$\calS = ( S ,\, \theta ,\, \sinit )$ and a product~$P
  \in \calP$, the projection~$\calS_P$ of~$\calS$ for the product~$P$
  is the LTS $\calS_P = ( S ,\, {\transP{}} ,\, \sinit )$, where $s
  \transP{\alpha} s'$ if some~$\psi \in \FExp$ exists such that $s
  \ftrans{\alpha}{\psi} s'$ and $P \models \psi$, for $s,s' \in S$ and
  $\alpha \in \calAtau$.
\end{definition}

\blankline

\noindent
We use $s \bbisimP t$ to denote that $s$ and~$t$ are branching
bisimilar states for the projected LTS~$\calS_P$.

\blankline

\begin{theorem}
  \label{th-projection}
  Let~$\calS$ be an FTS with states $s$ and~$t$. It holds that $s
  \bfbisim t$ iff $s \bbisimP t$ for all $P \in \calP$.
\end{theorem}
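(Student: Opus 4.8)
The plan is to prove the two implications separately, passing between the family-level relation on $\calS$ and the per-product relations on the projections $\calS_P$, and to exploit throughout the simple observation that a $\tau$-path $v \ftranss{\eta} \that$ in the FTS projects to $v \Trans \that$ in $\calS_P$ exactly when $P \models \eta$ (since $\eta$ is the conjunction of the feature expressions along the path), and dually that any $\tau$-path in $\calS_P$ lifts to such an FTS $\tau$-path whose conjunctive label is satisfied by $P$. The same lifting applies to a single (optional) $\alpha$-step.

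For the implication from $\bfbisim$ to $\bbisimP$, I would start from a branching feature bisimulation $R$ for $\calS$ with $R(s, \TRUEhat, t)$ and, for each fixed $P \in \calP$, define the LTS relation $R_P = \lc (u,v) \mid R(u, \varphihat, v) \text{ for some } \varphi \text{ with } P \models \varphi \rc$. The task is then to check that $R_P$ is a branching bisimulation for $\calS_P$. Symmetry is inherited from $R$. For the transfer condition, given $R_P(u,v)$ and $u \transP{\alpha} u'$, I lift the latter to $u \ftrans{\alpha}{\psi} u'$ with $P \models \psi$, so $P \models \varphi \land \psi$; the feature transfer condition of $R$ then produces witnesses, and its coverage clause forces $P \models \eta_i \land \psi_i \land \varphi_i \land \varphi'_i$ for some index $i$. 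For that $i$, projecting $v \ftranss{\eta_i} \that_i \ftrans{(\alpha}{\psi_i)} t'_i$ yields $v \Trans \that_i \trans{(\alpha)}{P} t'_i$, while $P \models \varphi_i, \varphi'_i$ give $R_P(u, \that_i)$ and $R_P(u', t'_i)$ — precisely the branching bisimulation diagram. Since $P \models \TRUE$, $R(s, \TRUEhat, t)$ gives $R_P(s,t)$, hence $s \bbisimP t$.

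The converse is where the real work lies. Assuming $s \bbisimP t$ for every $P$, I would first use that $\calF$, and hence $\calP$, is finite, so that every set of products is definable by a feature expression: for states $u,v$ pick $\varphi_{u,v} \in \FExp$ with $P \models \varphi_{u,v}$ iff $u \bbisimP v$ (e.g.\ the disjunction of $\chi(P)$ over the relevant $P$), and set $R = \lc (u, \varphihat_{u,v}, v) \mid u,v \in S \rc$. Symmetry follows from symmetry of each $\bbisimP$. To verify the transfer condition for $R(u, \varphihat_{u,v}, v)$ and $u \ftrans{\alpha}{\psi} u'$, I consider the finite set $Q = \lc P \in \calP \mid P \models \varphi_{u,v} \land \psi \rc$. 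Each $P \in Q$ satisfies $u \bbisimP v$ and $u \transP{\alpha} u'$, so branching bisimilarity in $\calS_P$ (being itself a branching bisimulation) furnishes a diagram $v \Trans \that_P \trans{(\alpha)}{P} t'_P$ with $u \bbisimP \that_P$ and $u' \bbisimP t'_P$; lifting it back to the FTS gives $v \ftranss{\eta_P} \that_P \ftrans{(\alpha}{\psi_P)} t'_P$ with $P \models \eta_P \land \psi_P$, and taking $\varphi_P = \varphi_{u, \that_P}$, $\varphi'_P = \varphi_{u', t'_P}$ yields $R(u, \varphihat_P, \that_P)$, $R(u', \varphihat'_P, t'_P)$ together with $P \models \varphi_P \land \varphi'_P$. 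Because $Q$ is finite, enumerating these witnesses as $i = 1, \dots, n$ discharges the existentials of Definition~\ref{df-branching-feature-bisim}(b), and since each $P \in Q$ satisfies its own disjunct, the coverage clause holds. Finally, $s \bbisimP t$ for all $P$ forces $\varphi_{s,t} \simP \TRUE$, so $R(s, \TRUEhat, t)$ and $s \bfbisim t$.

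I expect the main obstacle to be exactly this amalgamation step in the converse: a priori we are handed one branching bisimulation diagram per product, whereas the feature transfer condition instead demands a single \emph{finite} disjunctive family of diagrams covering all products at once. The two facts that make this go through are the finiteness of $\calP$ (bounding the number of diagrams that must be collected) and the definability of the product sets $\lc P \mid u \bbisimP v \rc$ as feature expressions (letting $R$ record, in a single $\FExphat$-component per state pair, exactly for which products the two states are bisimilar). Everything else — the symmetry checks and the routine translation of conjunctions of feature expressions along $\tau$-paths into the products satisfying them and back — is bookkeeping.
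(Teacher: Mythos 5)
Your proposal is correct and follows essentially the same route as the paper: the forward direction via the per-product relation $R_P = \lc (u,v) \mid R(u,\varphihat,v),\, P \models \varphi \rc$, and the converse by amalgamating one branching-bisimulation diagram per product into a finite disjunctive family, using finiteness of $\calP$ and the definability of product sets by feature expressions. The only (cosmetic) difference is that you work with the maximal relation $\bbisimP$ and the expressions $\varphi_{u,v}$ characterizing $\lc P \mid u \bbisimP v \rc$, whereas the paper fixes one branching bisimulation $R_P$ per product and lets $\varphi$ characterize $\lc P \mid R_P(u,v) \rc$.
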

%\halflineup
%\halflineup
\begin{proof}
  Suppose $R \subseteq { S \times \FExphat \times S }$ is a branching
  feature bisimulation relation with $R( s , \TRUEhat , t )$. Pick $P
  \in \calP$. Define $R_P = \lc (s',t') \mid \exists \mkern1mu \varphi
  \colon R( s' , \varphihat , t' ) \land P \models \varphi \rc$. We
  claim that~$R_P$ is a branching bisimulation relation
  with~$R_P(s,t)$.
  Clearly $R_P$ is symmetric and $R_P(s,t)$, since $R( s , \TRUEhat ,
  t )$ and $P \models \TRUE$. In order to verify the transfer
  condition for~$R_P$, suppose $R_P(s',t')$ and $s' \transP{\alpha}
  s''$. Pick, with appeal to the definitions of $R_P$ and~$\calS_P$,
  feature expressions~$\varphi, \psi$ such that (i)~$R( s' ,
  \varphihat , t' )$ and $P \models \varphi$, and (ii)~$s'
  \ftrans{\alpha}{\psi} s''$ and $P \models \psi$. Since $R$~is a
  branching feature bisimulation, we can find $\that_i$, $t_i$,
  $\eta_i$, $\psi_i$, $\varphi_i$ and~$\varphi'_i$, for $i = 1 ,
  \ldots , n$, such that
  \begin{displaymath}
    %% \forall \mkern1mu i ,\, 1 \leqslant i \leqslant n \colon
    t' \fTrans{\eta_i} \that_i \ftrans{(\alpha}{\psi_i)} t'_i 
    \land
    R( s' ,\, \varphihat_i ,\, \that_i ) ,\,
    R( s'' ,\, \varphihat'_i ,\, t'_i )
    \quad \text{and} \quad
    P \: \models \:
    \textstyle{\bigvee_{1 \leqslant i \leqslant n}} \: 
    \eta_i \land \psi_i \land \varphi_i \land \varphi'_i
  \end{displaymath}
  for $i = 1, \ldots ,n$. Choose~$i$ such that $P \models \eta_i \land
  \psi_i \land \varphi_i \land \varphi'_i$. Since $t' \fTrans{\eta_i}
  \that_i \ftrans{(\alpha}{\psi_i)} t'_i$, $P \models \eta_i \land
  \psi_i \land \varphi_i$, and~$R( s'' , \varphihat'_i , t'_i )$, we
  have by definition of~$\calS_P$ and~$R_P$ that $t' \Trans \that
  \transP{(\alpha)} t''_i$ and $R_P(s'' , t'_i)$. Thus,
  $R_P$~satisfies the transfer condition, as was to be shown.

  To prove the reverse implication, pick for each $P \in \calP$, a
  branching bisimulation relation~$R_P$ such that~$R_P(s,t)$. Define
  $R \subseteq {S \times \FExp \times S}$ by
  %% \begin{displaymath}
  \begin{math}
    R = 
    \lc ( s' , \varphihat , t' ) \mid \forall \mkern1mu P \in \calP
   \colon P \models \varphi \Leftrightarrow R_P(s',t')
   \rc 
  \end{math}.
  %% \end{displaymath} 
  We verify that $R$ is a branching feature bisimulation. Clearly,
  $R(s ,\, \TRUEhat ,\, t)$.
  %% More generally, for all $s',t' \in S$ there exists a feature
  %% expression~$\varphi$ such that $R( s' , \varphihat , t' )$,
  %% e.g.\ one may take $\bigvee \lc \chi(P) \mid R_P(s',t') \rc$
  %% where $\chi(P)$ is the characteristic formula for the set of
  %% features~$P$.
  In order to check the transfer condition for~$R$, suppose $R(s' ,
  \varphihat , t' )$ and $s' \ftrans{\alpha}{\psi} s''$. Then it
  holds, for all $P \in \calP$ with $P \models \varphi$, that
  $R_P(s',t')$. Moreover, for all $P \in \calP$ with $P \models
  \psi$, we have $s' \transP{\alpha} s''$. Thus, for all $P \in \calP$
  with $P \models \varphi \land \psi$, we can pick $\that_P, t'_P$
  and~$\eta_P ,\psi_P$ such that $P \models \eta_P \land \psi_P$, $t'
  \fTrans{\eta_P} \that_P \ftrans{(\alpha}{\psi_P)} t'_P$ and $R_P(s',
  \that_P)$ and~$R_P(s'', t'_P)$.

  Suppose $\lc P \in \calP \mid P \models \varphi \land \psi \rc = \lc
  P_1 ,\, \ldots ,\, P_k \rc$. Also, for $i = 1, \ldots , k$, let
  $\that_i, t'_i$ and~$\eta_i, \psi_i$ be shorthand for $\that_{P_i},
  t'_{P_i}$ and~$\eta_{P_i}, \psi_{P_i}$, respectively. Since $P_i
  \models \chi(P_i)$, $R_{P_i}(s',\that_i)$ and $R_{P_i}(s'',t'_i)$,
  it holds that $R( s' ,\, \varphihat_i ,\, \that_i )$ and $R(
  s'' ,\, \varphihat'_i ,\, t'_i )$ 
  for $\varphi_i , \varphi'_i \in \FExp$ such that
  $\widehat{\chi(P_i)} \Rightarrow \varphi_i$ and $\widehat{\chi(P_i)}
  \Rightarrow \varphi'_i$. 
  %% Note that $P \not\models \chi(P_i)$ for $P \neq P_i$. 
  We conclude that, for $i = 1 , \ldots , k$,
  it holds that
  $t' \fTrans{\eta_i} \that_i \ftrans{(\alpha}{\psi_i)} t''_i$, 
  $R( s' , \varphihat_i , \that_i )$ and 
  $R( s'' , \varphihat'_i  , t'_i )$ 
  while $P \models \varphi \land \psi$ $\Rightarrow$ $P \models
  \textstyle{\bigvee_{1 \leqslant i \leqslant n}} \: \eta_i \land
  \psi_i \land \varphi \land \varphi'$, which verifies the transfer
  condition for~$R$.
\end{proof}

%\blankline

\noindent
The theorem asserts the soundness of branching feature bisimulation
for FTS with respect to branching bisimulation for the projected LTS
for all products. In the sequel, we propose an algorithm for
minimization of an FTS modulo branching feature bisimulation and
compare, in a case study, verification of properties against the
minimized FTS to verification of properties against the minimized
product LTS\@.

%% file algorithm.tex
% !TeX root = fmsple15.tex
% !TEX TS-program = latex

\section{Minimization modulo coherent branching feature bisimulation} 
\label{sec-algorithm}

When minimizing an FTS~$\calS$ we look for an FTS~$\calS'$ satisfying
$\calS \bfbisim \calS'$ and such that it is the smallest in \lq
size\rq. For branching bisimulation for LTS it is the case that a
branching bisimilar LTS with the minimal number of states also has the
minimal number of transitions (after removal of
$\tau$-loops). Algorithms for branching bisimulation reduction make
use of this fact by looking for the unique LTS with the minimal number
of states. Unfortunately, this is not true for branching feature
bisimulation, as is demonstrated in
Figure~\ref{fig:notMinTransitions}: The FTS $\calT$ and~$\calU$ are
both branching feature bisimilar to FTS~$\calS$, and both have the
minimal number of states. However, $\calU$~has twice as many
transitions as~$\calT$.\\

\begin{figure}[h]
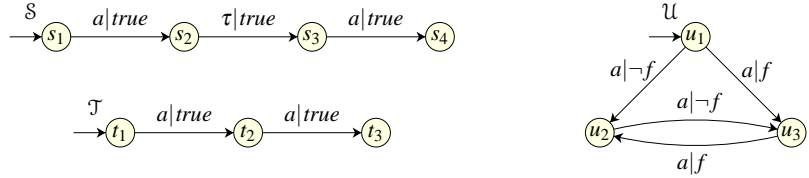

  \begin{center}
    \small
    \scalebox{0.85}{%
    \begin{digraph}(125,16)(-15,29.5)
      %\put(-20,25){\framebox(130,30){}}
      \graphset{iangle=180,fangle=270} 
      \graphset{Nadjust=n,Nw=4.5,Nh=4.5}
      \graphset{AHLength=1.6,AHlength=1,AHangle=30}
      \node[Nframe=n](S)(-14,49){$\calS$}
      \node[fillcolor=LightYellow](s)(-10,45){$s_1$}
      \node[fillcolor=LightYellow](s2)(10,45){$s_2$}
      \node[fillcolor=LightYellow](s3)(30,45){$s_3$}
      \node[fillcolor=LightYellow](s4)(50,45){$s_4$}
      \imark(s)
      \node[Nframe=n](T)(-4,34){$\calT$}
      \node[fillcolor=LightYellow](t)(0,30){$t_1$}
      \node[fillcolor=LightYellow](t2)(20,30){$t_2$}
      \node[fillcolor=LightYellow](t3)(40,30){$t_3$}
      \imark(t)
      \node[Nframe=n](U)(86,49){$\calU$}
      \node[fillcolor=LightYellow](u2)(75,30){$u_2$}
      \node[fillcolor=LightYellow](u)(90,45){$u_1$}
      \node[fillcolor=LightYellow](u3)(105,30){$u_3$}
      \imark(u)
      \edge[ELside=l](s,s2){$a | \mkern1mu \TRUE$}
      \edge[ELside=l](s2,s3){$\tau | \mkern1mu \TRUE$}
      \edge[ELside=l](s3,s4){$a | \mkern1mu \TRUE$}
      \edge[ELside=l](t,t2){$a | \mkern1mu \TRUE$}
      \edge[ELside=l](t2,t3){$a | \mkern1mu \TRUE$}
      \edge[ELside=l,ELdist=0](u,u3){$a | f$}
      \edge[ELside=r,ELdist=-0.5,ELpos=50](u,u2){$a | \neg f$}
      \edge[ELpos=52, ELside=l, curvedepth=2](u2,u3){$a | \neg f$}
      \edge[ELpos=48, ELside=l, curvedepth=2](u3,u2){$a | f \phantom{\neg}$}
    \end{digraph}
    } %%
\end{center}
\caption{\label{fig:notMinTransitions}Three branching feature
  bisimilar FTS}
\end{figure}

%\vspace*{-0.03cm}
\noindent
We see that the property of feature bisimulation that allows to merge
multiple transitions with the same label and different feature
expressions into a single transition now hinders us, since it also
allows to split transitions. To avoid this problem we restrict to
\emph{coherent} bisimulations
(cf.\ Definition~\ref{df-branching-feature-bisim}d). Thus, we require
that states of~$\calS$ can only be related to states of the
reduced~$\calS'$ for (supersets of) their reachability
set. Unfortunately, this recipe does not guarantee that a minimal FTS
is found, as Figure~\ref{fig:nonsplitNotMinimal} below shows, but
among all coherent branching feature bisimilar FTS our algorithm is
able to find the smallest one, see
  Theorem~\ref{th-correctness-algorithm}.

\begin{figure}[h]
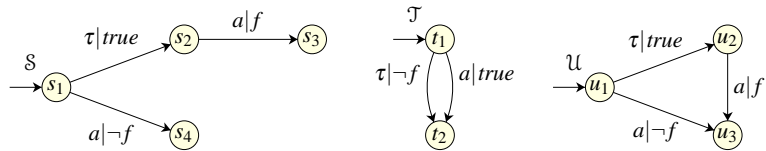

  \begin{center}
    \small
    \scalebox{0.85}{%
    \begin{digraph}(120,17)(-12,33)
      %\put(5,25){\framebox(110,30){}}
      \graphset{iangle=180,fangle=270} 
      \graphset{Nadjust=n,Nw=4.5,Nh=4.5}
      \graphset{AHLength=1.6,AHlength=1,AHangle=30}
      \node[Nframe=n](S)(-14,41.5){$\calS$}
      \node[fillcolor=LightYellow](s)(-10,37.5){$s_1$}
      \node[fillcolor=LightYellow](s2)(10,45){$s_2$}
      \node[fillcolor=LightYellow](s3)(30,45){$s_3$}
      \node[fillcolor=LightYellow](s4)(10,30){$s_4$}
      \imark(s)
      \node[Nframe=n](T)(46,49){$\calT$}
      \node[fillcolor=LightYellow](t)(50,45){$t_1$}
      \node[fillcolor=LightYellow](t2)(50,30){$t_2$}
      \imark(t)
      \node[Nframe=n](U)(71,41.5){$\calU$}
      \node[fillcolor=LightYellow](u2)(95,30){$u_3$}
      \node[fillcolor=LightYellow](u)(75,37.5){$u_1$}
      \node[fillcolor=LightYellow](u3)(95,45){$u_2$}
      \imark(u)
      \edge[ELside=l](s,s2){$\tau | \mkern1mu \TRUE$}
      \edge[ELside=l](s2,s3){$a | f$}
      \edge[ELside=r](s,s4){$a | \neg f$}
      \edge[ELpos=40,ELside=l,curvedepth=2](t,t2){$a | \mkern1mu \TRUE$}
      \edge[ELpos=40, ELside=r, curvedepth=-2](t,t2){$\tau | \neg f$}      
      \edge[ELside=l,ELdist=0.5,ELpos=50](u,u3){$\tau | \mkern1mu \TRUE$}
      \edge[ELside=r,ELdist=0.5,ELpos=50](u,u2){$a | \neg f$}
      \edge[ELside=l](u3,u2){$a | f$}
    \end{digraph}
    } %%
  \end{center}
  \caption{\label{fig:nonsplitNotMinimal}Minimal branching feature
    bisimilar vs.\ minimal coherent branching feature bisimilar}
\end{figure}

%\vspace*{-0.03cm}
\noindent
In Figure~\ref{fig:nonsplitNotMinimal}, FTS~$\calT$ is branching
feature bisimilar to FTS~$\calS$, and has the minimal number of states
and transitions. However, when restricting to coherent branching
feature bisimulation relations, FTS~$\calU$ is the smallest FTS that
can be obtained from~$\calS$ such that $\calS \cbfbisim \calU$. Note
that the relation~$R$ with $R(s_2,\hat{f},t_1)$ and $R(s_2, \neg
\mkern-2mu \hat{f},t_2)$ is not coherent, since $\varrho(s_2) = \TRUE$
does not imply $f$ nor~$\neg \mkern-2mu f$.  We will adapt the
reduction algorithm described in Section~\ref{sec-branching-bisim} for
minimization modulo coherent branching feature bisimulation.

Before describing the algorithm, we first show that the problem of
coherent branching feature bisimulation minimization is NP-hard by
reducing the chromatic number problem to it: given a graph, what is
the minimum number of colors to color the nodes such that adjacent
nodes have different colors? To verify the construction, we need an
auxiliary result.

%\blankline

\begin{lemma}
  \label{thm:reachBisimToBisim}
  Let~$\calS = ( S ,\, \theta ,\, s_0 )$ be an FTS with states $s$
  and~$t$. If $R(s ,\, \varrho(s) {\land} \mkern1mu \varrho(t) ,\, t)$
  for a branching feature bisimulation relation~$R$, then $\calS
  \cbfbisim \calS'$ with states $s$ and~$t$ related to a single state
  of~$\calS'$. 
\end{lemma}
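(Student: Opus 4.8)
The plan is to take $\calS'$ to be the quotient of $\calS$ that merges $s$ and $t$ into a single fresh state~$w$, and then to assemble a coherent branching feature bisimulation for $\calS$ and~$\calS'$ out of branching bisimulations of the projections, one per product. Write $\pi \colon S \to S'$ for the quotient map, so that $\pi(s) = \pi(t) = w$ and $\pi(u) = u$ otherwise, and let $\pi(s_0)$ be the initial state of~$\calS'$. The subtle point is the definition of the transition constraint~$\theta'$: merely taking the disjunction of the merged constraints would let the behaviour of~$t$ leak into products that reach~$s$ but not~$t$ (and symmetrically), breaking bisimilarity. I would therefore guard each contribution by the reachability of its source, setting $\theta'(p,\alpha,p') = \bigvee \{\, \theta(u,\alpha,v) \land \varrho(u) \mid \pi(u) = p,\ \pi(v) = p' \,\}$. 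For a non-merged source this guard is harmless, since it only suppresses transitions out of states the product cannot reach; for the merged source~$w$ it is exactly what keeps the projections faithful.

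The first step is to check, product by product, that $\calS_P$ and~$\calS'_P$ are branching bisimilar. For $P$ with $P \models \varrho(s) \land \varrho(t)$, feeding the hypothesis $R(s, \widehat{\varrho(s) \land \varrho(t)}, t)$ into the construction of~$R_P$ from the proof of Theorem~\ref{th-projection} yields $s \bbisimP t$, so identifying $s$ and~$t$ is sound. For $P$ that reaches at most one of $s,t$, the reachability guards switch off the transitions of the unreachable state at~$w$, so $w$ reproduces exactly the behaviour of the reachable one (and is itself unreachable when neither is). In all cases $\pi$ induces $\calS_P \bbisim \calS'_P$, with $s$, $t$ and every reachable~$u$ branching bisimilar to their $\pi$-images in the disjoint union $\calS_P \uplus \calS'_P$.

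The second step is to lift these witnesses to the feature level, following the reverse direction of the proof of Theorem~\ref{th-projection}. Relating each $u$ to~$\pi(u)$ with the feature expression~$\varrho(u)$ gives the backbone of the candidate relation, and in particular relates $s$ and $t$ to the single state~$w$ via $\widehat{\varrho(s)}$ and~$\widehat{\varrho(t)}$. Finiteness of~$\calP$ turns the per-product matches into the finite disjunctions required by the transfer condition, so that this backbone discharges the transfer obligations whose matching transitions stay on the $\pi$-diagonal, and coherency of these diagonal triples is immediate since $\varrho(u) \Rightarrow \varrho(u)$.

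The main obstacle is coherency of the \emph{whole} relation. The transfer condition for a merged state forces us to match a transition of~$w$ that descends from~$t$ by a $\tau$-stuttering sequence out of~$s$, and the pre-action state~$\shat$ of that sequence must be related to~$w$. Such an~$\shat$ is branching bisimilar to~$t$ only on the products that actually reach it through~$s$, whereas coherency demands that the feature expression attached to the triple $(\shat, \cdot, w)$ be implied by the full reachability~$\varrho(\shat)$; so relating~$\shat$ to~$w$ would split the products of~$\shat$. Reconciling these two requirements—choosing the matching sequences so that every pre-action state that arises can be related to~$w$ without splitting its product set—is the crux of the argument. I expect it to be carried by combining the guards with the hypothesis: the guards ensure that outside $\varrho(s) \lor \varrho(t)$ the state~$w$ carries no behaviour to be reproduced, while Theorem~\ref{th-projection} supplies the needed bisimilarity on the common products, so that matchings can be selected keeping the resulting relation coherent.
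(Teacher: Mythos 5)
Your construction of $\calS'$ is essentially the paper's: merge $s$ and $t$ into a fresh state, take disjunctions of the merged transition constraints, and keep everything else fixed. The one place you deviate is the reachability guard $\varrho(u)$ on each disjunct; the paper's proof uses the plain disjunctions $\theta(u,a,s)\lor\theta(u,a,t)$ and $\theta(s,a,v)\lor\theta(t,a,v)$. Your guard is a sensible precaution --- it is what the preprocessing step $\theta(s,\alpha,s') \assign \theta(s,\alpha,s')\land\varrho(s)$ achieves before the lemma is invoked in the correctness proof of the algorithm --- and your reason for it (behaviour of $t$ leaking into products that reach $s$ but not $t$) is exactly right. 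Routing the per-product analysis through Theorem~\ref{th-projection} is likewise consistent with the paper, whose proof simply asserts that ``one constructs'' a coherent relation $R'$ with $R'(s,\varrho(s),r)$ and $R'(t,\varrho(t),r)$.

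The problem is that your last paragraph does not prove the statement; it names the obstacle and expresses the expectation that it can be overcome. The transfer condition for the triple $(t,\varrho(t),r)$ forces you to add triples $(\that_i,\varphi_i,r)$ for the intermediate states $\that_i$ of the matching $\tau$-sequences out of $t$ (and symmetrically out of $s$), and coherency then forces $\varrho(\that_i)\Rightarrow\varphi_i$, so the transfer condition must subsequently be discharged for \emph{all} products reaching $\that_i$ --- including products that reach $\that_i$ along paths avoiding $s$ and $t$ entirely, for which neither the hypothesis $R(s,\varrho(s)\land\varrho(t),t)$ nor your guards say anything about how $\that_i$ relates to $r$. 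Closing this requires actually exhibiting the feature expressions for these auxiliary triples and verifying both transfer and coherency for them, which you do not do. To be fair, the paper's own proof compresses exactly this step into a single sentence, so you have correctly located where the real work lies; but as submitted, your argument is incomplete at its decisive point.
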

%\halflineup
%\halflineup
\begin{proof}
  Let $\calS' = ( S' ,\, \theta' ,\, \sinit' )$ with $S' = ( S
  {\setminus} \singleton{s, t} ) \cup \singleton{r}$ for some $r
  \notin S$, with $\theta'(u, a, v) = \theta(u, a, v)$ for~$u, v \in
  S'$, $u,v \neq r$ and $\theta'(u, a, r) = \theta(u, a, s) \lor
  \theta(u, a, t)$ for~$u \neq r$, $\theta'(r, a, v) = \theta(s, a, v)
  \lor \theta(t, a, v)$, for~$v \neq r$, and $\theta'(r, a, r) =
  \bigvee_{q, w \in \singleton{s,t}} \theta(q, a, w)$, and finally
  with $\sinit' = \sinit$ if~$\sinit \neq s,t$, and $\sinit' = r$
  otherwise.
  Using that $R$ is a branching feature bisimulation with~$R(s,
  \varrho(s) \land \varrho(t), t)$, one constructs a coherent
  branching feature bisimulation~$R'$ such that $R'(s,\varrho(s),r)$
  and~$R'(t,\varrho(t),r)$.
\end{proof}

\blankline

\noindent
Next we set the stage for a reduction of graph coloring to coherent
branching feature bisimulation minimization.
Consider an undirected graph $\calG = (V, E)$ with nodes in~$V$ and
edges in~$E$. Let~$\calA = \singleton{a}$, $\calF = \lc f_v \mid v \in
V \rc$ and $\calP = \lc P_v \mid v \in V \rc$. The FTS $\calS_G = (
S_G ,\, \theta_G ,\, s_1 )$ of~$\calG$ is such that $S_G = \lbrace s_1
,\, s_2 \rbrace \cup \lc s_v \mid v \in V \rc$ for distinct
states~$s_1$ and $s_2$, $\theta(s_1, a, v) = \bigvee_{u \in V} \: \lc f_u
\mid (u, v) \in \calG \rc \lor f_v$ for all $v \in V$, and $\theta(v,
a, s_2) = f_v$, and finally such that $\theta(s, a, s') = \FALSE$ in
all other cases.

\blankline

\begin{theorem}
  \label{thm:algo-np-hard}
  Let $\calS'_G$ be the minimal FTS that is coherent branching feature
  bisimilar to the FTS~$\calS_G$ given above. Then the number of
  states in~$\calS'_G$ is equal to the chromatic number of~$\calG$
  plus~$2$.
\end{theorem}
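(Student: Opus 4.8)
The plan is to compute the projections of $\calS_G$, reduce the problem to properly colouring $\calG$, and match the bound from both sides. I read $P_v$ as the product enabling exactly $f_v$, so a feature expression modulo $\simP$ is fixed by the set of vertices $v$ with $P_v$ satisfying it; write $N[v]$ for the closed neighbourhood of $v$. First I would record $\varrho(s_1)\simP\TRUE$, $\varrho(s_2)\simP\TRUE$ (each $P_x$ reaches $s_2$ along $s_1\trans{a}{P_x}s_x\trans{a}{P_x}s_2$), and $\varrho(s_v)\simP\bigvee_{u\in N[v]}f_u$, so that $P_x\models\varrho(s_v)$ iff $x\in N[v]$. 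Projecting onto $P_x$, the surviving transitions are exactly $s_1\trans{a}{P_x}s_w$ for $w\in N[x]$ and $s_x\trans{a}{P_x}s_2$. Hence $\calS_{G,P_x}$ has, up to $\simeq_{P_x}$, three classes: $\{s_1\}$ (which admits the trace $aa$), $\{s_x\}$ (one $a$ into a deadlock), and the deadlocks $\{s_2\}\cup\{s_w\mid w\neq x\}$; I call these the rank-$2$, rank-$1$ and rank-$0$ classes of $P_x$.

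For the upper bound, fix an optimal colouring $V=C_1\cup\cdots\cup C_k$ into $k=\chi(\calG)$ independent sets and write $c(v)$ for the index with $v\in C_{c(v)}$. I would take $\calS'_G$ with states $s'_1,s'_2,r_1,\ldots,r_k$, transitions $s'_1\ftrans{a}{\bigvee_{v\in C_i}\varrho(s_v)}r_i$ and $r_i\ftrans{a}{\bigvee_{v\in C_i}f_v}s'_2$, and relate the two systems by $R(s_1,\TRUEhat,s'_1)$, $R(s_2,\TRUEhat,s'_2)$ and $R(s_v,\widehat{\varrho(s_v)},r_{c(v)})$. Coherency is immediate because the feature expression attached to each original state $s$ is exactly $\varrho(s)$ (or $\TRUE$). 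Since each $C_i$ is independent, $x$ belongs only to its own class, so under $P_x$ only $r_{c(x)}$ reaches $s'_2$ and $\calS'_{G,P_x}$ collapses to the same three-class chain as $\calS_{G,P_x}$; a direct check of the transfer condition of Definition~\ref{df-branching-feature-bisim}(b) --- equivalently, a per-class application of Lemma~\ref{thm:reachBisimToBisim} --- shows $\calS_G\cbfbisim\calS'_G$ with $\chi(\calG)+2$ states.

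The lower bound is the crux and the step I expect to be hardest. Let $\calS_G\cbfbisim\calS'$ via a coherent $R$. Exactly as in the proof of Theorem~\ref{th-projection}, $R$ induces for each $P_x$ a branching bisimulation $R_{P_x}$ between $\calS_{G,P_x}$ and $\calS'_{P_x}$ with $R_{P_x}(s_1,s'_\ast)$, where $s'_\ast$ is the initial state of $\calS'$. Running the transfer condition along $s_1\trans{a}{P_x}s_x\trans{a}{P_x}s_2$ yields states $r_x,d_x\in S'$ with $R_{P_x}(s_x,r_x)$ and $R_{P_x}(s_2,d_x)$, so $r_x$ sits in the rank-$1$ and $d_x$ in the rank-$0$ class of $P_x$. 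The decisive claim is that $x\mapsto r_x$ is a proper colouring. Suppose $x,y$ are adjacent with $r_x=r_y=r$, and write $R(s_y,\varphi,r)$ with $P_y\models\varphi$. Coherency gives $\varrho(s_y)\Rightarrow\varphi$; as $x\in N[y]$ we get $P_x\models\varphi$, hence $R_{P_x}(s_y,r)$ and therefore $s_x\simeq_{P_x}r\simeq_{P_x}s_y$. This contradicts $s_x$ and $s_y$ lying in different ranks of $\calS_{G,P_x}$, so $\{r_x\mid x\in V\}$ contains at least $\chi(\calG)$ states.

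It remains to split off two further states. The initial state $s'_\ast$ equals no $r_x$, since $s'_\ast\simeq_{P_x}s_1$ (rank $2$) while $r_x\simeq_{P_x}s_x$ (rank $1$). For the deadlock I use $\varrho(s_2)\simP\TRUE$: from $R(s_2,\varphi,d_x)$ coherency forces $\varphi\simP\TRUE$, so $R(s_2,\TRUEhat,d_x)$ and thus $d_x\simeq_P s_2$ for every product $P$ at once --- a universal deadlock, which is distinct from $s'_\ast$ (rank $2$ everywhere) and from each $r_x$ (rank $1$ in $P_x$). Hence $|S'|\geq\chi(\calG)+2$, matching the construction. The two places where coherency is indispensable --- carrying a shared rank-$1$ witness over to the adjacent product via $\varrho(s_y)$, and collapsing the $s_2$-witness to a single deadlock via $\varrho(s_2)\simP\TRUE$ --- are precisely what blocks the transition-splitting that an arbitrary branching feature bisimulation would exploit, cf.\ Figure~\ref{fig:nonsplitNotMinimal}.
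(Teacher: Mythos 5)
Your proof is correct and follows essentially the same route as the paper: the upper bound via the same quotient construction from a colouring, and the lower bound by showing that two adjacent vertices' states cannot be identified, using coherency to transport the relation from $P_y$ to the adjacent product $P_x$ via $\varrho(s_y)$. You are in fact more explicit than the paper on two points it leaves implicit --- the rank classification of the projections and the accounting for the two extra states coming from $s_1$ and $s_2$ --- but the underlying argument is the same.
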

%\halflineup
%\halflineup
\begin{proof}
  Let $\Gamma$ be a set of colors. Suppose $\gamma : V \to \Gamma$ is
  a coloring of~$\calG$ using all colors. Then the FTS $( \lbrace s_1
  , s_2 \rbrace \cup \Gamma ,\, \theta_\gamma ,\, s_1 )$, where
  $\theta_\gamma (s_1,a,C) = \bigvee_{\gamma(u) = C} \:
  \theta(s_1,a,s_u)$, $\theta_\gamma (C,a,s_2) = \bigvee_{\gamma(u) =
    C} \: f_u$ is coherent bran\-ching feature bisimilar to~$\calS_G$
  via the relation $R$ such that $R(s_i, \TRUEhat, s_i)$ for $i =
  1,2$, and $R(s_u, \varrho(s_u), \gamma(u))$.

  Reversely, an FTS~$\calS'$ that is coherent branching feature bisimilar
  to~$\calS_G$ can only identify states $s_u,s_v$ for $u,v \in
  V$. Hence such an FTS induces a coloring for~$\calG$: Pick for each
  state~$s_v$ a single~$s' \in S'$ such that $R(s_v,\varphi,s')$ for a
  coherent branching feature bisimulation~$R$ relating $\calS$
  and~$\calS'$.  If states $s_u$ and~$s_v$ correspond to the same
  state of~$\calS'$, there can be no edge between $u$ and~$v$
  in~$\calG$. For if $(u,v)$ is an edge in~$\calG$, we have $s_1
  \trans{a}{} u \trans{a}{} s_2$ and $s_1 \trans{a}{} v \nrightarrow
  {}$ in the projection of~$\calS_G$ for the product~$p_u$, but $s_1
  \trans{a}{} u \nrightarrow {}$ and $s_1 \trans{a}{} v \trans{a}{}
  s_2$ in the projection of~$\calS_G$ for the product~$p_v$.

  It follows that the FTS~$\calS'_G$ that is minimal coherent
  branching feature bisimilar to~$\calS_G$ corresponds to a minimal
  coloring of~$\calG$. Moreover, the number of states different from the
  images of $s_1$ and~$s_2$ corresponds to the number of colors
  needed.
\end{proof}

\blankline

\noindent
Note how, in the proof above, the coherence condition \lq if
$R(s,\varphi,s')$ then $\varrho(s) \Rightarrow \varphi$\rq\ enforces that
for the minimal FTS~$\calS'_G$ the products that can reach~$s$
in~$\calS_G$ are not split over multiple states in~$\calS'_G$.
From the theorem we obtain the following result.

\blankline

\begin{corollary}
  Constructing a minimal coherent branching feature bisimilar FTS
  is NP-complete.
  \qed
\end{corollary}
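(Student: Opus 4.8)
The plan is to prove the two halves of NP-completeness separately, phrasing the minimization problem in its standard decision form: given an FTS~$\calS$ and a bound $m \in \mathbb{N}$, does there exist an FTS~$\calS'$ with at most $m$~states such that $\calS \cbfbisim \calS'$? NP-hardness is then immediate from Theorem~\ref{thm:algo-np-hard}. The map $(\calG, k) \mapsto (\calS_G, k+2)$ is computable in polynomial time, since the state set, the feature set~$\calF$ and the product set~$\calP$ of~$\calS_G$ are all linear in~$|V|$ and each transition constraint of~$\theta_G$ is a disjunction of at most~$|V|$ literals. By the theorem the minimal coherent branching feature bisimilar FTS of~$\calS_G$ has exactly $\chi(\calG) + 2$ states, so $\calS_G$ admits a coherent branching feature bisimilar FTS with at most $k+2$ states iff $\calG$ is $k$-colorable. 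As deciding $\chi(\calG) \leqslant k$ is NP-complete, this is a valid polynomial many-one reduction.

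For membership in NP I would use a guess-and-check certificate consisting of a candidate FTS~$\calS'$ together with a coherent branching feature bisimulation relation $R \subseteq S \times \FExphat \times S'$ witnessing $\calS \cbfbisim \calS'$. First I must argue that such a certificate can be kept polynomial in the input. Since feature expressions matter only up to~$\simP$, each class in~$\FExphat$ is faithfully represented by its set of satisfying products in~$\calP$, a polynomial-size object; moreover the merging construction of Lemma~\ref{thm:reachBisimToBisim} shows that the transition constraints of~$\calS'$ may be taken to be disjunctions of original constraints of~$\calS$, and need never grow in complexity. Hence $\calS'$, the feature expressions on its transitions, and the relation~$R$ are all of size polynomial in that of~$\calS$ and~$\calP$. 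It then suffices to verify that $R$ is symmetric, that $R(\sinit, \TRUEhat, \sinit')$, that $|S'| \leqslant m$, that the coherence requirement $\varrho(s) \Rightarrow \varphi$ holds whenever $R(s,\varphihat,s')$, and that the transfer condition of Definition~\ref{df-branching-feature-bisim}(b) is met.

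The main obstacle is this last verification step, since coherence and the transfer condition are stated through implications between feature expressions --- for instance $P \models \varphi \land \psi \Rightarrow P \models \bigvee_i \eta_i \land \psi_i \land \varphi_i \land \varphi'_i$ --- and deciding validity of boolean implications is in general co-NP-hard. What rescues polynomial-time checkability is that every such implication is relative to the fixed, explicitly given product set~$\calP$, so it reduces to quantification over the finitely many $P \in \calP$ and can be tested one product at a time. Concretely, I would precompute $\varrho(s)$ and the $\tau$-closure for both FTS in polynomial time, and then, for each related triple $R(s,\varphihat,t)$, each transition $s \ftrans{\alpha}{\psi} s'$, and each $P \in \calP$ with $P \models \varphi \land \psi$, check that some element of the (polynomially many) candidate matching sequences $t \fTrans{\eta_i} \that_i \ftrans{(\alpha}{\psi_i)} t'_i$ recorded in~$R$ and~$\calS'$ satisfies $P \models \eta_i \land \psi_i \land \varphi_i \land \varphi'_i$; taking the disjunction over all available matches and checking that it covers $\varphi \land \psi$ relative to~$\calP$ settles the transfer condition without having to guess the quantifier~$n$. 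Because $\calP$ is finite and part of the input, all these checks run in polynomial time, so the certificate is polynomially verifiable and the problem lies in NP. Together with the reduction above this yields NP-completeness.
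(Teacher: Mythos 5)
Your proposal is correct, and it does more work than the paper, which states the corollary with no proof body at all: the only argument the paper supplies is the reduction in Theorem~\ref{thm:algo-np-hard}, so in effect it establishes NP-hardness and treats membership in NP as immediate. Your hardness half coincides exactly with that implicit argument --- the map $(\calG,k)\mapsto(\calS_G,k+2)$ is the intended polynomial reduction from chromatic number, and your size accounting for $\calS_G$, $\calF$ and $\calP$ is right. The genuinely additional content is your membership half, and it is the part that actually needs care: the transfer and coherence conditions quantify over feature expressions, and your key observation --- that all implications are relative to the explicitly listed finite set $\calP$, so every feature expression can be replaced by its set of satisfying products and every condition checked one product at a time against precomputed $\tau$-reachability --- is exactly what makes the certificate polynomially verifiable. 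Your remark that the per-product check eliminates the existential quantifier over $n$ (collect one matching tuple per product; finitely many products give a finite covering family, and conversely) is the right way to handle the disjunctive transfer condition. Two small points you gloss over: (i) the bound on the size of the witness relation $R$ is better justified by noting that one may keep at most one triple per pair $(s,s')$, since the disjunction of all $\varphi$ with $R(s,\varphihat,s')$ again satisfies the transfer condition (each $P$ satisfying the disjunction satisfies one disjunct and inherits its matching family); your appeal to Lemma~\ref{thm:reachBisimToBisim} for this is not really what that lemma says, though the conclusion is unaffected once feature expressions are stored as subsets of $\calP$; and (ii) the whole membership argument presupposes that $\calP$ is part of the input in explicit, enumerated form --- a reasonable reading consistent with the reduction, but worth stating, since with an implicitly given product set the per-product checks would not be polynomial.
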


\blankline

\noindent 
Before we provide an algorithm for minimization of an FTS modulo
coherent branching feature bisimulation, we slightly generalize the
notion of a partition as used in Section~\ref{sec-branching-bisim}, to
allow a state to belong to separate groups of products.

A collection $\calB = \lc B_i \mid i \in I \rc$ of non-empty subsets
of a set~$S$ is called a \emph{semi-partition} of~$S$ if
(i)~$\bigcup_{i \in I} \: B_i = S$, and (ii)~for $j \neq i: B_j
\setminus B_i \neq \varnothing$. Thus, $\calB$ covers~$S$ and no~$B_j$
is strictly contained in a~$B_i$. Also, for a semi-partition its
elements are referred to as \emph{blocks}. We say that a
semi-partition~$\calB'$ is a refinement of a semi-partition~$\calB$ if
every block of~$\calB'$ is a subset of a block of~$\calB$. Likewise,
we say that $\calB$ is coarser than~$\calB'$. A semi-partition~$\calB$
of~$S$ induces a relation $\sim_{\calB}$ on~$S$ (not necessarily an
equivalence relation), where two elements of~$S$ are related iff they
are included in the same block of~$\calB$.

Given an FTS $\calS = ( S ,\, \theta ,\, \sinit )$, we first do some
preprocessing. We eliminate unreachable states and strengthen the
transition constraint with the reachability condition for its source
state:
\begin{displaymath}
%\smallskip \\ \centerline{%
  %\begin{math}
  S \assign \lc s \in S \mid \varrho(s) \notsimP\, \FALSE \rc
  \quad \text{and} \quad
  \theta(s,\alpha,s') \assign \theta(s,\alpha,s') \land \varrho(s)
  %\end{math}
%} %%
\end{displaymath} 
We define the set $\calAf$ of so-called featured labels by $\calAf =
\lc (\alpha, \psi) \mid \exists \mkern1mu s,t \, \exists \mkern1mu \alpha
\colon \theta(s,\alpha,t) =
\psi \land \psi \notsimP\, \FALSE \rc$.
For a semi-partition~$\calB$ of~$S$, $B, B' \in \calB$ and featured
label $(\alpha,\psi) \in \calAf$ we let
\begin{displaymath}
  \def\arraystretch{1.2}
  \begin{array}{l}
    \textit{non-neg}_{(\alpha,\psi)}(B,B') \; = \; 
    \lc s \in B \mid 
    \forall P \in \calP ,\, P \models \varrho(s) \land \psi \colon
    \exists \mkern1mu n \,
    \exists \mkern1mu s_0 , \ldots , s_n \in B \mkern1mu \,
    \exists \mkern1mu s' \in B' \,
    \exists \mkern1mu \psi_1 , \ldots , \psi_n , \psi'
    \colon 
    \\ \qquad \qquad \qquad \qquad \qquad
    s_0 = s \land {} 
    ( \forall \mkern1mu i, 1 \leqslant i \leqslant n \colon
    s_{i-1} \ftrans{\tau}{\psi_i} s_i \land P \models \psi_i ) \land 
    s_n \ftrans{(\alpha}{\psi')} s' \land P \models \psi' \rc,
  \end{array}
  \def\arraystretch{1.0}
\end{displaymath}
and define its subset $\textit{pos}_{(\alpha,\psi)}(B, B')$ to include
%%  = \lc s \in
%% \textit{non-neg}_{(\alpha,\psi)}(B, B') \mid \psi \Rightarrow
%% \varrho(s) \land s_n \ftrans{(\alpha}{\psi')} s' \rc$
all $s \in \textit{non-neg}_{(\alpha,\psi)}(B, B')$ for which $\psi
\Rightarrow \varrho(s)$ and $s_n \ftrans{\alpha}{\psi'} s'$ for $s_n
\in B$, $s' \in B'$ as above. Moreover, we define
%% its complement
$\textit{neg}_{(\alpha,\psi)}(B, B') = B \setminus
\textit{non-neg}_{(\alpha,\psi)}(B, B')$.
%% by
%% \begin{math}
%%   \textit{pos}_{(\alpha, \psi)}(B, B') \; = \; 
%%   \lc s \in \textit{non-neg}_{(\alpha,\psi)}(B, B') \mid 
%%   \psi \Rightarrow r(s) \rc
%% \end{math}
%% and %% \quad \text{and} \quad
%% \begin{math}
%%   \textit{neg}_{(\alpha, \psi)}(B, B') \; = \; 
%%   B \setminus \textit{non-neg}_{(\alpha,\psi)}(B, B')
%% \end{math}
We know for sure that two states $s$ and~$t$ of a block~$B$ are
behaviorally different, if $s \in \textit{pos}_{(\alpha,\psi)}(B,B')$
and $t \in \textit{neg}_{(\alpha,\psi)}(B,B')$.  Therefore, we say
that $B'$ is a \emph{splitter} of~$B$ with respect to~$(\alpha, \psi)$
if $B \neq B'$ or $\alpha \neq \tau$, and
$\mathit{pos}_{(\alpha,\psi)}(B,B'), \mathit{neg}_{(\alpha,\psi)}(B,
B') \neq \varnothing$ 
%
%% Furthermore, if $\alpha = \tau$ we required
%% that $\textit{non-neg}_{(\alpha,\psi)}(B,B') \neq B \cap B'$, 
%
(meaning there is at least one state in the $\textit{pos}$-set
that must do an actual $\tau$-step to reach~$B'$).
If $\calB$ is a semi-partition of~$S$ and $B'$ is a splitter of $B$
with respect to $(\alpha,\psi)$, then the semi-partition~$\calB'$ is
obtained from~$\calB$ by replacing block~$B$ by $B_1=
\textit{non-neg}_{(\alpha,\psi)}(B,B')$ and $B_2 = B {\setminus} \,
\mathit{pos}_{(\alpha,\psi)}(B,B')$. However, in the case that $B_1$
or~$B_2$ is a subset of another block in the partition (apart
from~$B$), it is not added to ensure that~$\calB'$ is a
semi-partition.

The minimization algorithm starts from the trivial
semi-partition~$\singleton{S}$, and keeps refining the semi-partition
until no splitters are left. This results in the coarsest
semi-partition, but still a block may be covered completely by other
blocks. Therefore, as post-processing, we remove as many blocks as
possible from the semi-partition, while preserving the
semi-partition properties, to find the smallest semi-partition
(e.g.\ using an algorithm for the minimum set cover problem).
\begin{itemize}
  \item []
  $\calB \assign \lbrace S \rbrace$\,;
  \smallskip \\
  \textbf{while} a splitter~$B'$ for a block~$B$ with respect to a
  featured label~$(\alpha,\psi)$ exists \textbf{do} 
  \smallskip \\ \mbox{} \qquad
  $\calB \assign \calB {\setminus} \singleton{B}$\,;
  \smallskip \\ \mbox{} \qquad
  \textbf{if}
  $\textit{non-neg}_{(\alpha,\psi)}(B,B') \subseteq B''$ for no~$B''
  \in \calB$ \textbf{then} 
  $\calB \assign \calB \cup 
  \lbrace \textit{non-neg}_{(\alpha,\psi)}(B,B') \rbrace$
  \textbf{end}\,;
  \smallskip \\ \mbox{} \qquad
  \textbf{if}
  $B {\setminus} \, \textit{pos}_{(\alpha,\psi)}(B,B') \subseteq B''$ for no~$B''
  \in \calB$ \textbf{then} 
  $\calB \assign \calB \cup  
  \lbrace B {\setminus} \, \textit{pos}_{(\alpha,\psi)}(B,B') \rbrace$
  \smallskip
  \textbf{end} \\
  $\calBmin \assign \text{smallest subset of $\calB$ covering $S$}$\,;
\end{itemize}
It is easy to see that the algorithm terminates: Note that after
each iteration at least two states have been permanently split from
each other. Since there are less than $|S|^2$~possible pairs of states
in $S$, termination will occur in at most $|S|^2$~iterations. In the
theorem below, we call a semi-partition~$\calC$ a \emph{stable}
partition with respect to a block~$B'$ if for no block~$B$ and for no
featured label~$(\alpha,\psi)$, $B'$~is a splitter of~$B$ with
respect to~$(\alpha,\psi)$. The semi-partition~$\calC$ is itself called stable if $\calC$ is stable with respect to all its blocks.

\blankline

\begin{lemma}
  \label{thm:coarsestSemiPartition}
  For an FTS $\calS = ( S ,\, \theta ,\, \sinit )$, $\calBmin$
  obtained from the algorithm is the smallest stable semi-partition
  refining $\lbrace S \rbrace$.
\end{lemma}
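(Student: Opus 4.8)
The plan is to prove Lemma~\ref{thm:coarsestSemiPartition} by establishing two inclusions: that $\calBmin$ is stable, and that it refines every stable semi-partition refining $\singleton{S}$ (which makes it the smallest/coarsest such). First I would argue \emph{stability of the output}: when the \textbf{while} loop terminates, no splitter exists for any block with respect to any featured label, which is exactly the definition of a stable semi-partition. The only subtlety is the post-processing step $\calBmin \assign$ smallest subset of~$\calB$ covering~$S$; I would check that discarding redundant (fully covered) blocks cannot create a new splitter, so that stability is preserved. This follows because a block $B''$ that is removed is covered by the remaining blocks, and the $\textit{pos}$/$\textit{neg}$ sets of the surviving blocks are computed relative to the transition structure, not to the presence of~$B''$; removing a redundant block leaves every surviving pair $(B, B')$ with the same $\textit{pos}_{(\alpha,\psi)}(B,B')$ and $\textit{neg}_{(\alpha,\psi)}(B,B')$, hence no new splitter appears.

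The core of the argument is the \emph{minimality/coarsest} direction, which I would prove by a refinement-invariant argument. Let $\calC$ be any stable semi-partition refining $\singleton{S}$. I claim that every semi-partition $\calB$ arising during the run of the algorithm is coarser than~$\calC$ (equivalently, $\calC$ refines $\calB$), proceeding by induction on the iterations of the loop. The base case is immediate since $\singleton{S}$ is coarser than everything. For the inductive step, suppose $\calC$ refines the current $\calB$, and the algorithm splits a block $B \in \calB$ using splitter $B'$ and featured label $(\alpha,\psi)$ into $B_1 = \textit{non-neg}_{(\alpha,\psi)}(B,B')$ and $B_2 = B \setminus \textit{pos}_{(\alpha,\psi)}(B,B')$. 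I must show that each block $C \in \calC$ contained in $B$ lies entirely within $B_1$ or entirely within $B_2$. The key point is that within a single block $C$ of the \emph{stable} semi-partition $\calC$, no two states can be separated as $\textit{pos}$ versus $\textit{neg}$: if $C$ contained a state $s \in \textit{pos}_{(\alpha,\psi)}(B,B')$ and a state $t \in \textit{neg}_{(\alpha,\psi)}(B,B')$, then (tracing the definitions back through the blocks of~$\calC$ refining $B$ and $B'$) some block $C'$ of~$\calC$ refining $B'$ would be a splitter of $C$ with respect to $(\alpha,\psi)$, contradicting stability of~$\calC$. Hence $C \subseteq \textit{pos}_{(\alpha,\psi)}(B,B') \subseteq B_1$ or $C \cap \textit{pos}_{(\alpha,\psi)}(B,B') = \varnothing$, i.e.\ $C \subseteq B_2$; either way $C$ is contained in one of the new blocks, so $\calC$ refines $\calB'$. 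The conditional insertions guarding against subset-blocks do not affect this, since they only suppress the addition of a block already covered, which can only make $\calB'$ coarser.

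Finally I would reconcile the coarsest claim with the post-processing. The invariant gives that the pre-post-processing semi-partition is coarser than~$\calC$; I then argue that $\calBmin$ (the minimal cover extracted from it) is still coarser than $\calC$, using that every state of $\calC$'s blocks is still covered by blocks of $\calBmin$ and that $\calC$ refines each such covering block. Combining the two directions: $\calBmin$ is stable, and any stable semi-partition refines it, so $\calBmin$ is the \emph{smallest} (coarsest) stable semi-partition refining $\singleton{S}$, as claimed.

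The hard part will be the splitter-transfer step in the inductive argument, namely carefully verifying that a $\textit{pos}$/$\textit{neg}$ separation of two states within a single block $C$ of $\calC$ genuinely induces a splitter of some block of $\calC$, contradicting its stability. This requires unwinding the quantifier-heavy definitions of $\textit{non-neg}_{(\alpha,\psi)}$ and $\textit{pos}_{(\alpha,\psi)}$ relative to the finer blocks of $\calC$ sitting inside $B$ and $B'$, and handling the per-product reachability conditions $P \models \varrho(s) \land \psi$ correctly. The semi-partition setting (where blocks may overlap and the induced relation $\sim_\calB$ need not be transitive) also demands care: I must track containment of blocks rather than equivalence classes throughout, and ensure the subset-suppression guards in the algorithm do not break the refinement invariant.
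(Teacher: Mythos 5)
Your proposal follows essentially the same route as the paper: an induction on the iterations of the loop showing that every stable semi-partition refines the current one, with the key step being that a block $C$ of a stable $\calC$ lying inside $B$ cannot contain both a $\textit{pos}$-state and a $\textit{neg}$-state for the splitter in question (the paper derives the contradiction exactly as you sketch, by transferring the witnessing path from $s$ to $t$ using stability of $\calC$ block by block along $C_0,\dots,C_n,C'$). One small correction: from $C \cap \textit{pos}_{(\alpha,\psi)}(B,B') \neq \varnothing$ you may only conclude $C \subseteq \textit{non-neg}_{(\alpha,\psi)}(B,B') = B_1$, not $C \subseteq \textit{pos}_{(\alpha,\psi)}(B,B')$; the final containment in $B_1$ is unaffected.
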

%\halflineup
%\halflineup
\begin{proof}
  We show by induction on the number of iterations of the algorithm
  that each stable partition refines the current
  semi-partition~$\calB$.
  Let $\calC$ be a stable semi-partition. Clearly the statement holds
  initially, each semi-partition refines~$\singleton{S}$. Suppose
  $\calC$ refines semi-partition~$\calB$ obtained after a number of
  iterations and suppose a splitter~$B'$ of a block~$B$ exists with
  respect to a featured label~$(\alpha,\psi)$. It suffices to show
  that any block~$C$ of~$\calC$ is included in a block of~$\calB'$,
  the semi-partition obtained by splitting~$B$. Pick a block
  of~$\calB$ containing~$C$. If this block is different from~$B$, we
  are done. So, suppose $C \subseteq B$. We have to show that either
  $C \subseteq \textit{non-neg}_{(\alpha,\psi)}(B,B')$ or $C \subseteq
  B \setminus \mathit{pos}_{(\alpha,\psi)}(B,B')$.

  Suppose $s,t \in C$ with $s \in \mathit{pos}_{(\alpha,\psi)}(B,B')$
  and $t \in \mathit{neg}_{(\alpha,\psi)}(B,B')$. We derive a
  contradiction. Pick a product $P \in \calP$ such that $P \models
  \psi$. Such a product exists by definition of~$\calAf$. Choose $s_0
  , \ldots , s_n \in B$, $s' \in B'$, $\psi_1 , \ldots , \psi_n ,
  \psi' \in \FExp$ such that $s_0 = s$, $s_{i-1} \ftrans{\tau}{\psi_i}
  s_i$ for $1 \leqslant i \leqslant n$, $s_n \ftrans{(\alpha}{\psi')}
  s'$, and moreover $P \models \psi_i$, for $1 \leqslant i \leqslant
  n$, and $P \models \psi'$. Let $C_0, \ldots, C_n, C'$ be the blocks
  of~$\calC$ such that $s_i \in C_i$ and $s' \in C'$. Note that $C_i
  \subseteq B$, for $0 \leqslant i \leqslant n$, and $C' \subseteq
  B'$. Using the fact that $\calC$ is stable we can construct a
  sequence $t_0, \ldots, t_m \in B$, $t' \in B'$, $\varphi_1 , \ldots
  , \varphi_m , \varphi' \in \FExp$ such that $t_0 = t$, $t_{i-1}
  \ftrans{\tau}{\varphi_i} t_i$ for $1 \leqslant i \leqslant m$, $t_n
  \ftrans{(\alpha}{\varphi')} t'$, and moreover $P \models \varphi_i$
  for $1 \leqslant i \leqslant m$, and $P \models \varphi'$. This
  contradicts $t \in \mathit{neg}_{(\alpha, \psi)}(B, B')$, and proves
  the induction step.
  Finally, we observe that $\calBmin$ itself is a stable
  semi-partition that refines~$\singleton{S}$.
  % Since splitting sets is never beneficial to finding a smallest set
  % cover, it is clear that after the last part of the algorithm $\calB$
  % equals the smallest stable semi-partition refining $\lc S \rc$.
\end{proof}

\blankline

\begin{lemma}
  \label{thm:fbisimToSemiPartition}
  Let $\calS = ( S ,\, \theta ,\, \sinit)$ be an FTS, and $\calS' = (
  S' ,\, \theta' ,\, \sinit')$ be an FTS such that $\calS \cbfbisim
  \calS'$ by a relation~$R$. Then $R$ defines a stable semi-partition
  $\calC$ of~$S$ such that $s \sim_{\calC} t$ iff $\exists \mkern1mu r \in S'
  \colon R(s, \varrho(s), r) \land R(t, \varrho(t), r)$.
\end{lemma}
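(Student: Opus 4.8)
The plan is to read the blocks straight off~$R$. For each $r \in S'$ put $B_r = \lc s \in S \mid R(s, \varrho(s), r) \rc$, drop the empty sets, and let $\calC$ consist of the $\subseteq$-maximal members of $\lc B_r \mid r \in S' \rc$ (removing duplicates); the maximality is forced by clause~(ii) of the definition of a semi-partition. Then $s \sim_{\calC} t$ holds exactly when $s$ and~$t$ share some $B_r$, which is the stated characterisation — pruning a block $B_r \subseteq B_{r'}$ does not disturb it, since the two states survive together in~$B_{r'}$. For coverage I would first record a \emph{tightening} observation: if $R(s, \varphi, r)$ with $\varrho(s) \Rightarrow \varphi$, then adjoining the triple $(s, \varrho(s), r)$ keeps $R$ a coherent branching feature bisimulation, because the transfer condition for~$\varrho(s)$ ranges over the products $P \models \varrho(s) \land \psi$, a subset of those for~$\varphi$, and coherence is immediate. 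Hence I may assume $R$ is closed under tightening and take it to be the largest coherent branching feature bisimulation between $\calS$ and~$\calS'$. Given a reachable~$s$, pick $P \models \varrho(s)$; tracking a $P$-path from~$\sinit$ via $R(\sinit, \TRUEhat, \sinit')$ and the transfer condition relates $s$ to some~$r$, say $R(s, \varphi, r)$, and coherence yields $\varrho(s) \Rightarrow \varphi$, so after tightening $s \in B_r$. Thus $\calC$ covers~$S$.

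The heart of the argument is to trade block membership for branching bisimilarity of the projections. By Theorem~\ref{th-projection}, for each product~$P$ the relation $R_P = \lc (u, w) \mid \exists \varphi \colon R(u, \varphi, w) \land P \models \varphi \rc$ is a branching bisimulation between $\calS_P$ and~$\calS'_P$ with $R_P(\sinit, \sinit')$; since $R$ is taken maximal and coherent, on the $P$-reachable states $R_P$ realises full branching bisimilarity~$\bbisimP$. The property I would extract is: whenever $s, t \in B_r$ and $P \models \varrho(s) \land \varrho(t)$, then $s \bbisimP r \bbisimP t$, and, crucially, each block intersected with the $P$-reachable states is closed under~$\bbisimP$. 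The first part is immediate, as $R_P(s,r)$ and $R_P(t,r)$ compose; the closure of blocks is precisely where maximality is used, since it lets me promote a bare $\bbisimP$-relationship between a reachable state and~$r$ to~$R_P$, and thence, through coherence and tightening, to membership of~$B_r$.

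With these tools, stability is a contradiction argument paralleling the proof of Lemma~\ref{thm:coarsestSemiPartition}. Suppose $B' = B_{r'}$ splits $B = B_r$ with respect to $(\alpha, \psi)$, witnessed by $s \in \textit{pos}_{(\alpha,\psi)}(B,B')$ and $t \in \textit{neg}_{(\alpha,\psi)}(B,B')$, and fix the product $P \models \varrho(t) \land \psi$ showing $t \notin \textit{non-neg}_{(\alpha,\psi)}(B,B')$. As $s \in \textit{pos}$ forces $\psi \Rightarrow \varrho(s)$, we also get $P \models \varrho(s) \land \psi$, so for~$P$ there is a $\tau$-path inside~$B$ from~$s$ to some~$s_n$ followed by $s_n \ftrans{\alpha}{\psi'} s'$ with $s' \in B'$, all feature expressions satisfied by~$P$. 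Since $s, t \in B_r$ gives $s \bbisimP t$, the branching-bisimulation matching of this path by~$t$ produces $t \Trans \that \transP{(\alpha)} t'$ in~$\calS_P$ with every state on the $\tau$-prefix $\bbisimP s$ and $t' \bbisimP s'$. By closure of blocks under~$\bbisimP$ the $\tau$-prefix stays inside~$B$ and $t' \in B'$, so $t \in \textit{non-neg}_{(\alpha,\psi)}(B,B')$ for this~$P$, contradicting the choice of~$P$. I expect this final step — forcing the matched path to stay inside~$B$ and to reach the \emph{designated} block~$B'$ rather than a merely $\bbisimP$-equivalent state lying in a different block — to be the main obstacle; it is exactly what block-closure, and with it the maximality of~$R$ together with the coherence condition (which forbids splitting the products that reach a state across distinct images), is there to supply.
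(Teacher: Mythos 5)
Your route is genuinely different from the paper's, and it is the detour that creates the problem. The paper never passes through the product projections: it uses the device of Lemma~\ref{thm:reachBisimToBisim} to compose $R$ with its converse through the common image state~$r$, obtaining a branching feature bisimulation $R'$ \emph{on}~$\calS$ with $R'(s_n, \varrho(s_n) \land \varrho(t), t)$, and then applies the FTS-level transfer condition of~$R'$ once to manufacture the path contradicting $t \in \mathit{neg}_{(\alpha,\psi)}(B,B')$. (Your observation that $\psi \Rightarrow \varrho(s)$ lets the witness product for $t \notin \textit{non-neg}_{(\alpha,\psi)}(B,B')$ also satisfy $\varrho(s)$ is a nice touch the paper glosses over.) You instead project to a single product~$P$ and work with $\bbisimP$, which forces you to rely on the block-closure property.

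That block-closure claim is the genuine gap. Membership $u \in B_r$ means $R(u,\varrho(u),r)$, and for this triple to sit in \emph{any} branching feature bisimulation its transfer condition must hold for every product $Q \models \varrho(u) \land \psi$; by the forward direction of Theorem~\ref{th-projection}, $u$ must therefore match~$r$ in the projections of \emph{all} products reaching~$u$, not just in~$\calS_P$. A bare single-product relationship $u \bbisimP r$ cannot be ``promoted'' to this by taking $R$ maximal, and coherence works against you rather than for you: it forbids the maximal coherent relation from recording the product-specific fact with a feature expression narrower than $\varrho(u)$. The phenomenon is exactly the one behind Figure~\ref{fig:nonsplitNotMinimal}: there $s_2$ is branching bisimilar to $t_1$ in the projection for the product with~$f$, yet $\varrho(s_2) = \TRUE$ and $(s_2,\TRUEhat,t_1)$ lies in no branching feature bisimulation, so $s_2$ would not belong to the block of~$t_1$. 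Since your final step --- keeping the matched $\tau$-prefix inside~$B$ and landing $t'$ in the designated~$B'$ --- rests entirely on this closure, the argument does not go through as written; you would need to return to the FTS level, as the paper does, to track which state of~$S'$ the matched states relate to. (Two secondary points: replacing the given~$R$ by the largest coherent branching feature bisimulation changes the semi-partition~$\calC$ that the lemma speaks about, so at best you prove a variant of the statement --- harmless for Theorem~\ref{th-correctness-algorithm}, but it should be flagged; the tightening observation itself is fine.)
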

%\halflineup
%\halflineup
\begin{proof}
  We have to show that $\calC$ is stable indeed. Suppose that there
  are blocks $B, B'$ in $\calC$ such that $B'$ is a splitter of $B$
  with respect to a featured label~$(\alpha,\psi)$. This means there
  are states $s$ and $t$ in $B$ such that $s \in
  \mathit{pos}_{(\alpha,\psi)}(B,B')$ and $t \in
  \mathit{neg}_{(\alpha,\psi)}(B,B')$. We pick $P \in \calP$ such that
  $P \models \varrho(s) \land \varrho(t) \land \psi$. By definition of
  the ${\it pos}$-set there exist $s_0 , \ldots , s_n \in B$, $s' \in
  B'$, $\psi_1 , \ldots , \psi_n , \psi' \in \FExp$ such that $s_0 =
  s$, $s_{i-1} \ftrans{\tau}{\psi_i} s_i$ for $1 \leqslant i \leqslant
  n$, $s_n \ftrans{(\alpha}{\psi')} s'$, and moreover $P \models
  \psi_i$, for $1 \leqslant i \leqslant n$, and $P \models
  \psi'$. Since $s_n \in B$ we have, by construction of $\calC$, both
  $R(s_n,\varrho(s_n),r)$ and $R(t,\varrho(t),r)$ for suitable $r \in
  \calS'$.  Therefore, there exists a feature bisimulation relation
  $R'$ on $\calS$ such that $R'(s_n, \varrho(s_n) \land \varrho(t),
  t)$. Using the transfer condition of this relation we can construct
  a sequence $t_0, \ldots, t_m \in B$, $t' \in B'$, $\varphi_1 ,
  \ldots , \varphi_m , \varphi' \in \FExp$ such that $t_0 = t$,
  $t_{i-1} \ftrans{\tau}{\varphi_i} t_i$ for $1 \leqslant i \leqslant
  m$, $t_n \ftrans{(\alpha}{\varphi')} t'$, and moreover $P \models
  \varphi_i$ for $1 \leqslant i \leqslant m$, and $P \models
  \varphi'$. This contradicts $t \in \mathit{neg}_{(\alpha, \psi)}(B,
  B')$, and proves that $\calC$ is stable.
\end{proof}

\blankline

\noindent
We are now in a position to prove the correctness of the minimization
algorithm. 

\blankline

\begin{theorem}
  \label{th-correctness-algorithm}
  Assume that $\calB$ is the partition obtained upon termination after
  applying the algorithm to the FTS $\calS = ( S ,\, \theta ,\, \sinit )$. 
  Define the FTS $\calS_{\mathit{min}} = ( \calB ,\,
  {\theta_{\mathit{min}}} ,\, B_{\mkern-1mu \ast} )$ by letting
  (i)~$\theta_\mathit{min}(B,\alpha,B') = \bigvee \lc \theta(s,a,s')
  \mid s \in B ,\, s' \in B' \rc$ with $B \neq B'$ or $\alpha \neq
  \tau$, and (ii)~by choosing $B_{\mkern-1mu \ast}$ such that 
  $\sinit \in B_{\mkern-1mu \ast} $. Then $\calS_{\mathit{min}}$ is 
  the smallest FTS that is coherent branching feature bisimilar 
  to~$\calS$.
\end{theorem}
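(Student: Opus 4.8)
The plan is to split the claim into a soundness part---that $\calS_{\mathit{min}}$ really is coherent branching feature bisimilar to~$\calS$---and a minimality part---that no coherent branching feature bisimilar FTS has fewer states. Throughout I write $\calB$ for the output semi-partition (denoted $\calBmin$ in the algorithm and in Lemma~\ref{thm:coarsestSemiPartition}), which by Lemma~\ref{thm:coarsestSemiPartition} is a stable semi-partition of~$S$ refining $\singleton{S}$, and in fact the smallest such.

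For soundness I would exhibit the canonical relation between~$\calS$ and~$\calS_{\mathit{min}}$: let $R$ relate $s \in S$ to every block $B \in \calB$ with $s \in B$, tagging the pair with its reachability formula, i.e.\ $R(s, \varrho(s), B)$ and, symmetrically, $R(B, \varrho(s), s)$. Coherence is then immediate, since the tag is exactly $\varrho(s)$ and $\varrho(s) \Rightarrow \varrho(s)$ trivially, while the initial states are related by $\TRUEhat$ because $\varrho(\sinit) \simP \TRUE$. The transfer condition from $\calS$ to $\calS_{\mathit{min}}$ is routine: a step $s \ftrans{\alpha}{\psi} s'$ with $s \in B$ and $s' \in B'$ is matched by the single direct step $B \ftrans{\alpha}{\theta_{\mathit{min}}(B,\alpha,B')} B'$, because $\psi$ is one of the disjuncts defining $\theta_{\mathit{min}}(B,\alpha,B')$ and, after the preprocessing step that conjoins $\varrho(s)$ into $\psi$, the coverage inequality holds with $n=1$ (using that $P \models \varrho(s) \land \psi$ forces $P \models \varrho(s')$, as $P$ then reaches $s'$); a $\tau$-step inside a single block is matched by the empty move via the $\ftrans{(\alpha}{\psi)}$ convention.

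The substantial direction is from $\calS_{\mathit{min}}$ back to $\calS$, and this is where stability is essential. Given a step $B \ftrans{\alpha}{\chi} B'$ of $\calS_{\mathit{min}}$ with $\chi = \theta_{\mathit{min}}(B,\alpha,B')$ and a product $P \models \varrho(s) \land \chi$, some disjunct of $\chi$ provides $q \in B$, $q' \in B'$ with $q \ftrans{\alpha}{\psi_0} q'$ and $P \models \psi_0$; after preprocessing $\psi_0 \Rightarrow \varrho(q)$, so $q \in \textit{pos}_{(\alpha,\psi_0)}(B,B')$. Since $\calB$ is stable, $B'$ is not a splitter of~$B$ for $(\alpha,\psi_0)$, and as the $\textit{pos}$-set is non-empty the $\textit{neg}$-set must be empty, i.e.\ $\textit{non-neg}_{(\alpha,\psi_0)}(B,B') = B$; in particular $s \in \textit{non-neg}_{(\alpha,\psi_0)}(B,B')$. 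Unfolding the definition of $\textit{non-neg}$ for the product~$P$ (which satisfies $\varrho(s) \land \psi_0$) yields a $\tau$-path $s \ftranss{\eta} \that$ inside~$B$ followed by an $\alpha$-step $\that \ftrans{(\alpha}{\psi)} t'$ into~$B'$, all enabled for~$P$; since the path stays in~$B$, each intermediate state is related to~$B$ and $t'$ to~$B'$, which is exactly a branching match of the $\calS_{\mathit{min}}$-step for~$P$. Collecting one such witness per product $P \in \calP$ with $P \models \varrho(s) \land \chi$ (finitely many, as $\calP$ is finite) supplies the sequences $\that_i, t'_i, \eta_i, \psi_i$ and tags $\varphi_i, \varphi'_i$ and makes the coverage disjunction $\bigvee_i \eta_i \land \psi_i \land \varphi_i \land \varphi'_i$ hold. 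I expect the reassembly of these \emph{per-product} witnesses into the single existential of the transfer condition, together with the bookkeeping of the feature expressions, to be the main obstacle of the proof.

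For minimality I would take any FTS $\calS' = ( S', \theta', \sinit')$ with $\calS \cbfbisim \calS'$ via a relation~$R$. By Lemma~\ref{thm:fbisimToSemiPartition}, $R$ induces a stable semi-partition~$\calC$ of~$S$ whose blocks are determined by the states $r \in S'$, each included in $\lc s \mid R(s, \varrho(s), r) \rc$, so that $|\calC| \leq |S'|$. By Lemma~\ref{thm:coarsestSemiPartition}, $\calB$ is the smallest stable semi-partition refining $\singleton{S}$, whence $|\calB| \leq |\calC| \leq |S'|$. As $\calS_{\mathit{min}}$ has exactly $|\calB|$ states, it is no larger than any coherent branching feature bisimilar~$\calS'$; together with the soundness part this shows that $\calS_{\mathit{min}}$ is the smallest such FTS.
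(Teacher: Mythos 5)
Your proof is correct and follows the same overall decomposition as the paper: the minimality half is verbatim the paper's argument, chaining Lemma~\ref{thm:fbisimToSemiPartition} (any coherent branching feature bisimilar $\calS'$ induces a stable semi-partition with at most $|S'|$ blocks) with Lemma~\ref{thm:coarsestSemiPartition} ($\calBmin$ is the smallest such). Where you genuinely add something is the soundness half. The paper's proof only \emph{asserts} that a coherent branching feature bisimulation between $\calS$ and $\calS_{\mathit{min}}$ exists, deferring to Lemma~\ref{thm:reachBisimToBisim}, which is a pairwise state-merging device and would have to be iterated without further comment; you instead construct the canonical relation $R(s,\varrho(s),B)$ for $s \in B$ directly and verify the transfer condition from $\calS_{\mathit{min}}$ back to $\calS$ by the right mechanism: non-emptiness of $\textit{pos}_{(\alpha,\psi_0)}(B,B')$ together with stability forces $\textit{non-neg}_{(\alpha,\psi_0)}(B,B') = B$, and the per-product witnesses are then collected into the finite disjunction of the transfer condition (finiteness of $\calP$ is exactly what makes this legitimate). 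This is the content the paper leaves implicit, and your version is the more self-contained of the two; the one step you flag as "routine" (matching a $\tau$-step whose target leaves the block by a direct $\theta_{\mathit{min}}$-transition, and one staying inside by the empty move) is indeed unproblematic given the preprocessing $\theta(s,\alpha,s') \assign \theta(s,\alpha,s') \land \varrho(s)$.
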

%\halflineup
\begin{proof}
  By Lemma~\ref{thm:coarsestSemiPartition} we have that $\calBmin$ is
  the smallest stable semi-partition refining $\singleton{S}$. It
  suffices to show, using Lemma~\ref{thm:reachBisimToBisim}, that a
  coherent branching feature bisimu\-lation for $\calS$
  and~$\calS_{\mathit{min}}$ exists. Since, by
  Lemma~\ref{thm:fbisimToSemiPartition} we have that every coherent
  branching feature bisimulation relation from~$\calS$ to an
  FTS~$\calS'$ induces a stable semi-partition on~$\singleton{S}$,
  implying that $\calS_{\mathit{min}}$ is indeed minimal.
\end{proof}

\blankline

\noindent
Thus, given an FTS~$\calS$, we continue to refine the trivial
semi-partition until no more splitter can be found. Splitting a 
block is done cautiously: (i) it must eliminate a splitter and 
(ii) it must yield a semi-partition again. The final semi-partition 
that is reached induces an FTS~$\calSmin$ that is the smallest 
FTS that is coherent branching feature bisimilar to~$\calS$. The 
next section reports on a small case study using this approach.

%% file experiments.tex
% !TeX root = fmsple15-new.tex
% !TEX TS-program = latex

\section{Experimental evaluation}
\label{sec-experiments}

We extended the example SPL of a coffee vending machine 
described in~\cite{ABFG11b,ABFG12,BV14b,BV14:formalise} 
with a soup component running in parallel. The complete SPL 
consists of~$18$ features and $118$ products and the FTS 
modeling it contains~$182$ states and~$691$ transitions. 
%The FTS describing the beverage component contains~$14$ states
%and~$23$ transitions, and the FTS describing the soup component
%contains~$13$ states and~$28$ transitions, for a total of~$182$
%states and~$691$ transitions in parallel composition.
The details of this SPL can be found in Appendix~\ref{app:SPL}\@.
Basically, each product contains the well-known beverage component and
optionally a soup component, and allows the insertion of either euros
or dollars (returned upon a cancel) in either of its components. The
user chooses a beverage (sugared or not) among those offered (at least
coffee, cappuccino only for euros) or else a type of soup (at least
one among chicken, tomato, pea). The user must place a cup to get
soup. A cup detector is optional (mandatory for dollars). When
present, soup is only poured if a cup was placed, else soup may be
spilled.  Placing a cup may need to be repeated if not detected. A
soup order may be canceled until a cup is detected. Optionally, a
shared ringtone may ring after delivery (mandatory for cappuccino),
after which the user takes a cup (with a drink or soup) and can again
insert money in either component.
Concrete features have an associated cost (zero for abstract features)
and the total cost of a product, summing the costs of the features it
includes, does not exceed the fixed upper bound of~35.

We used the mCRL2 toolset to verify the~$12$ properties listed in
Appendix~\ref{app:SPL} against this SPL, both product-by-product and
by using the FTS-based family approach described
in~\cite{BV14b,BV14:formalise}, and both with and without branching
(feature) bisimulation minimization. For the approach with
bisimulation we applied branching feature bisimulation to the FTS,
resulting in a reduced FTS, which we projected to obtain the reduced
LTS for each product. The results are shown in
Table~\ref{tab:results}. For the product-by-product approaches,
generating the projections for all products is included in the
computation time, and so is the time for bisimulation
reduction in case of the approaches with bisimulation. To even out
effects caused by other processes running whilst performing the
experiments, all computation times are averaged over~$5$ runs.

Regarding the product-by-product approach, performing bisimulation
reduction for the product LTS reduces the computation time by
about~$8\%$. For property~2 (\emph{The SPL is deadlock-free}), the
computation time with bisimulation is significantly larger than for
other properties. In this case abstraction does not reduce the LTS\@.
%% This is because in this case we cannot abstract from actions, since
%% the proposed branching bisimulation is not divergence sensitive.
A similar observation holds for properties~1 (\emph{If a coffee is
  ordered, it is eventually poured}), 5a (\emph{If a beverage is
  ordered, then eventually it is canceled or a cup is taken}) and~5b
(\emph{If soup is ordered, then eventually it is canceled, a cup is
  taken or the customer has bad luck}), which are false, but deemed
true after applying bisimulation reduction. They state that
some\-thing eventually happens, which is not true in reality since the
two components are running in parallel, thus abstraction creates
infinite loops that allow postponing that something
indefinitely. Applying bisimulation reduction causes these loops to be
abstracted from completely, making the properties true for the reduced
system. However, standard tricks, like the explicit signaling of the
end of a cycle, could be applied to alleviate this problem.

Now consider the FTS-based family approach. Without applying 
bisimulation reduction, the total computation time increases by 
almost~$50\%$ with respect to the product-by-product approach. 
Hence, for this SPL, FTS-based verification with mCRL2 is not 
beneficial compared to regular enumerative verification. 
% We do however see that with FTS-based verification the computation
% times differ significantly for the different properties, whereas the
% computation times for product wise verification are more or less
% constant. This indicates that $\ldots$.
However, if we apply bisimulation reduction, then the FTS-based
computation times decrease by~$>\!\!70\%$.  Only property~2 still
needs more computation time than in the product-based approach (again
because abstraction is not beneficial for the verification). Note that
in case less actions are involved in a property, it is possible to
abstract from larger parts of the FTS, implying faster
verification. This effect was much less in the product-by-product
approach. Hence, the more local a property, the more beneficial it is
to perform FTS-based family verification in combination with branching
feature bisimulation reduction using mCRL2\@. Obviously, this
observation needs to be confirmed by experimenting with different SPL,
but based on this example the techniques proposed in this paper look
rather promising.

\begin{table}%[h!t]
\begin{center}
\vspace{10pt}
%{\small %\footnotesize
\begin{tabular}{|l|r|c|r|c|r|c|r|c|} %{|l*{8}{|c}|}
\cline{1-9}
\!\!\parbox[t]{3mm}{\multirow{3}{*}{\rotatebox[origin=c]{90}{\sc proper-}}}\parbox[t]{0mm}{\multirow{3}{*}{\rotatebox[origin=c]{90}{\sc ties\ }}}
& \multicolumn{4}{|c|}{\sc product-by-product} & \multicolumn{4}{|c|}{\sc \rule{0pt}{9.5pt}FTS-based family approach} \\
\cline{2-9}
& \multicolumn{2}{|c|}{\!\!\!\sc without bisimulation\!\!\!} & \multicolumn{2}{|c|}{\!\!\!\sc with bisimulation\!\!\!} 
& \multicolumn{2}{|c|}{\!\!\!\sc without bisimulation\!\!\!} & \multicolumn{2}{|c|}{\!\!\!\sc with bisimulation\!\!\!} \\
\cline{2-9}
\!\!& {{\sc time} ({s})\rule{0pt}{8.5pt}} & {\sc result} &
      {{\sc time} ({s})} & {\sc result} & {{\sc time} ({s})} & {\small\sc result} & {{\sc time} ({s})} & {\sc result} \\
\hline
\!\!1\!\!& \rule{0pt}{8.5pt}42.04 & {\sc{false}} & 38.18 & {\sc{true}} & 52.96 & {\sc{false}} & 13.60 & {\sc{true}} \\
\!\!2\!\!& 41.78 & {\sc{true}} & 41.65 & {\sc{true}} & 53.86 & {\sc{true}} & 53.69 & {\sc{true}} \\
\!\!3a\!\!& 42.32 & {\sc{true}} & 37.76 & {\sc{true}} & 70.57 & {\sc{true}} & 7.70 & {\sc{true}} \\
\!\!3b\!\!& 42.01 & {\sc{true}} & 37.78 & {\sc{true}} & 59.96 & {\sc{true}} & 7.98 & {\sc{true}} \\
\!\!4a\!\!& 40.62 & {\sc{true}} & 38.00 & {\sc{true}} & 24.18 & {\sc{true}} & 8.65 & {\sc{true}} \\
\!\!4b\!\!& 40.20 & {\sc{true}} & 37.88 & {\sc{true}} & 20.78 & {\sc{true}} & 10.68 & {\sc{true}} \\
\!\!5a\!\!& 42.38 & {\sc{false}} & 38.51 & {\sc{true}} & 66.08 & {\sc{false}} & 18.59 & {\sc{true}} \\
\!\!5b\!\!& 42.34 & {\sc{false}} & 38.09 & {\sc{true}} & 69.95 & {\sc{false}} & 14.92 & {\sc{true}} \\
\!\!6\!\!& 43.63 & {\sc{true}} & 39.17 & {\sc{true}} & 105.35 & {\sc{true}} & 29.72 & {\sc{true}} \\
\!\!7a\!\!& 42.45 & {\sc{true}} & 38.19 & {\sc{true}} & 71.07 & {\sc{true}} & 13.84 & {\sc{true}} \\
\!\!7b\!\!& 42.35 & {\sc{true}} & 38.04 & {\sc{true}} & 79.05 & {\sc{true}} & 9.48 & {\sc{true}} \\
%8 & 41.70 & {\small\sc{true}} & 38.47 & {\small\sc{true}} & 64.54 & {\small\sc{false}} & 24.78 & {\small\sc{false}} \\
\!\!8\!\!& 42.82 & {\sc{true}} & 39.09 & {\sc{true}} & 80.69 & {\sc{true}} & 20.47 & {\sc{true}} \\
\hline
%{\bf Total} & \multicolumn{2}{|c|}{546.64 sec.} & \multicolumn{2}{|c|}{500.81 sec.} 
%& \multicolumn{2}{|c|}{819.04 sec.} & \multicolumn{2}{|c|}{234.10 sec.}
\!\!\!{\sc tot}\!\!\!& \rule{0pt}{8.5pt}504.94 & & 462.34 & & 754.50 & & 209.32 & \\
\cline{1-9}
\end{tabular}
\end{center} %}
\vspace{0.25cm}
\caption{\label{tab:results}Experimental evaluation results (time in seconds)}
\end{table}

% -- table with results

% -- results analysis

%% file conclusion.tex

\section{Concluding remarks}
\label{sec-conclusion}

We have defined a novel notion of branching feature bisimilarity for
FTS and an algorithm to minimize an FTS modulo coherent branching
feature bisimulation. This complements and formalizes part of the
feature-oriented modular verification approach of SPL with mCRL2 that
we outlined in~\cite{BV14:formalise,BV14b}. An initial application of
the minimization algorithm to a simplistic SPL promises significant
verification speed-ups.

It remains to establish the subset of the modal $\mu$-calculus that is
preserved by (coherent) branching feature bisimulation, i.e.\ what
properties are respected by our reduction technique. It is
  known that branching bisimulation preserves modal $\mu$-formula
  without the next
  operator~\cite{DV95:jacm}. Theorem~\ref{th-projection} may be used
  to lift the result to branching feature bisimulation, if the
  property $\calS \models \varphi$ iff $\calS_P \models \varphi$ is to
hold. We leave this to future work. It would also be
interesting to see whether the minimization algorithm's complexity can
be reduced, possibly by lifting some optimizations from the Groote \&
Vaandrager algorithm for LTS to our FTS setting, or split multiple
blocks based on a single splitter. 

Finally, we plan to evaluate our modular verification approach on a
more realistic SPL. By expanding the SPL of a coffee vending machine to
   examples growing in size, we may see if the exponential blow-up forecast
  by the NP-completeness result of Theorem~\ref{thm:algo-np-hard} can
  be traced, in particular to observe at what point reduction time
  outweighs the gain of family-based verification. As noted by one of
  the reviewers, family-based verification approaches perform better
  on larger models (both in terms of states and variability), whereas
  reduction techniques are difficult to apply on real, industrial
  models. We hope that the idea, sketched in~\cite{BV14b}, to exploit the
  inherent modular structure of SPL to guide the abstraction, will
  prove fruitful in finding balance in this trade-off and help to
  come up with automated support to reduce a system given a
  property. For this it is useful to reconstruct the experiments
  reported in~\cite{CCPSHL12} and to compare the performance
  gain. Also a study of the relationship of the preorder proposed
  in~\cite{CCPSHL12} to the equivalences put forward here, is an
  interesting topic of research that may increase our understanding of
  the interplay between variability and internal behaviour.

\subsection*{Acknowledgements}
\quarterlineup
Maurice ter Beek was supported by the EU FP7-ICT FET-Proactive project QUANTICOL (600708) and by the Italian MIUR project CINA (PRIN 2010LHT4KM).

%Moreover, we thank the reviewers for their detailed comments, which helped us to improve the paper.

\bibliographystyle{eptcs}
\bibliography{fmsple15}

\appendix
\section{Example SPL}
\label{app:SPL}

Here we provide the details of the example SPL used for the
experiments described in Section~\ref{sec-experiments}.  It is an
extension of the coffee vending machine described
in~\cite{ABFG11b,ABFG12,BV14b,BV14:formalise} with a soup component
running in parallel with the usual beverage component. It has the
following list of functional requirements:
%
%\vspace*{-0.15cm}
%{\small
\begin{itemize}\itemsep=-1pt\parsep=-1pt
\item Each product contains a beverage component. Optionally, 
also a soup component is present.
\item Initially, either a euro must be inserted, exclusively for 
European products, or a dollar must be inserted, exclusively for 
Canadian products. The money can be inserted in either of the 
components.
\item Optionally, money inserted in a component can be retrieved 
via a cancel button, after which money can be inserted in this 
component anew.
\item If money was inserted in the beverage component, the user 
has to choose whether (s)he wants sugar, by pressing one of two 
buttons, after which (s)he can select a beverage.
\item The choice of beverage (coffee, tea, cappuccino) varies, but 
coffee must be offered by all products whereas cappuccino may 
be offered solely by European products.
\item Optionally, a ringtone may be rung after delivering a 
beverage. However, a ringtone must be rung by all products 
offering cappuccino.
\item After the beverage is taken, money can be inserted again in 
the beverage component.
\item If money was inserted in the soup component, the user has 
to choose a type of soup (chicken, tomato, pea). The types of 
soup offered vary, but at least one type must be offered by all 
products with a soup component.
\item The soup component does not contain cups to serve the 
soup in. Hence, the user has to place a cup to pour the soup in. 
Optionally, a cup detector may be present in the soup component. 
It is required that all Canadian products with a soup component
are equipped with a cup detector.
\item If cup detection is present, the chosen type of soup will 
only be delivered after a cup has been detected by the soup 
component. However, the cup detector may fail to detect an 
already placed cup, after which the user will have to place it
again. If a cancel option is available, the user may cancel the
order as long as no cup has been detected. 
\item If cup detection is not present, the soup will be delivered 
immediately after a type of soup was chosen, regardless of 
whether a cup was placed. If no cup was placed there will be 
no soup to take.
\item Optionally, a ringtone (shared with the beverage 
component) may be rung after delivering soup.
\item If a cup was present, money can be inserted again 
in the soup component after the soup is taken.
\end{itemize}
%}
%
%\vspace*{-0.15cm}
\noindent
These yield the attributed feature model in
Figure~\ref{fig:FD} and the behavioral models in
Figures~\ref{fig:drink} and~\ref{fig:soup}.\pagebreak

\newcommand{\mytwolines}[2]{%
  \def\arraystretch{0.75}
  \begin{tabular}{c}{#1}\\{\small #2}\end{tabular}
  \def\arraystretch{1.0}}
\newcommand{\mythreelines}[3]{%
  \def\arraystretch{0.75}
  \begin{tabular}{c}{#1}\\{\small #2}\\{\small #3}\end{tabular}
  \def\arraystretch{1.0}}

\begin{figure}[h]
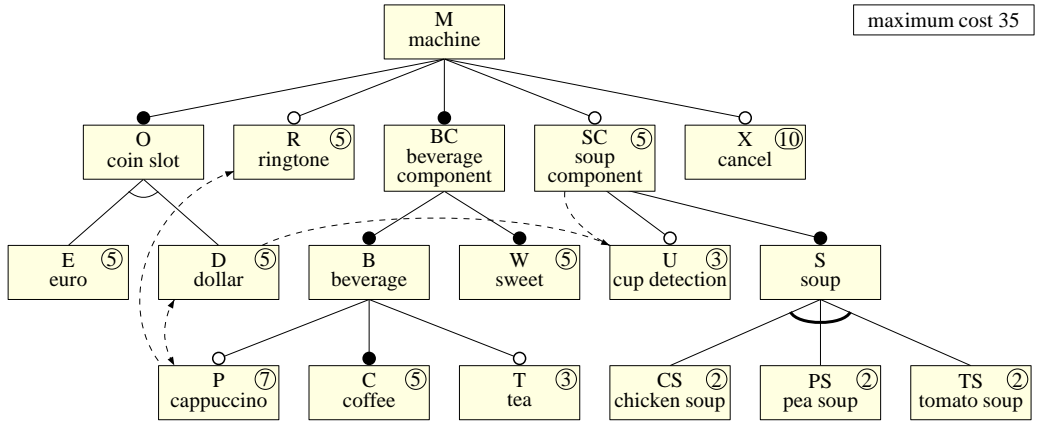

  \hspace*{2.5cm}%\centering
  \small
  \scalebox{0.8}{%
  \begin{graph}(100,100)
    \graphset{Nmr=0,Nadjust=n,Nw=20,Nh=9}
    \node[Nframe=n,Nw=0,Nh=0](SCx)(75,65){}
    \node[Nmr=1,Nw=2,Nh=2,linewidth=0.25](optU)(87.5,55.7){}
    \node[Nmr=1,Nw=2,Nh=2,linewidth=0.25,fillcolor=Black](manS)(112.5,55.7){}
    \edge(SCx,optU){}
    \edge(SCx,manS){}
    \node[fillcolor=LightYellow](M)(50,90)
         {\mytwolines{M}{machine}}
    \node[fillcolor=LightYellow](O)(0,70)
         {\mytwolines{O}{coin slot}}
    \node[fillcolor=LightYellow](R)(25,70)
         {\mytwolines{R}{ringtone}}
    \node[fillcolor=LightYellow,Nh=11](BC)(50,69)
         {\mythreelines{BC}{beverage}{component}}
    \node[fillcolor=LightYellow,Nh=11](SC)(75,69)
         {\mythreelines{SC}{soup}{component}}
    \node[fillcolor=LightYellow](X)(100,70)
         {\mytwolines{X}{cancel}}
    \node[fillcolor=LightYellow](E)(-12.5,50)
         {\mytwolines{E}{euro}}
    \node[fillcolor=LightYellow](D)(12.5,50)
         {\mytwolines{D}{dollar}}
    \node[fillcolor=LightYellow](B)(37.5,50)
         {\mytwolines{B}{beverage}}
    \node[fillcolor=LightYellow](W)(62.5,50)
         {\mytwolines{W}{sweet}}
    \node[fillcolor=LightYellow](U)(87.5,50)
         {\mytwolines{U}{cup detection}}
    \node[fillcolor=LightYellow](S)(112.5,50)
         {\mytwolines{S}{soup}}
    \node[fillcolor=LightYellow](P)(12.5,30)
         {\mytwolines{P}{cappuccino}}
    \node[fillcolor=LightYellow](C)(37.5,30)
         {\mytwolines{C}{coffee}}
    \node[fillcolor=LightYellow](T)(62.5,30)
         {\mytwolines{T}{tea}}
    \node[fillcolor=LightYellow](CS)(87.5,30)
         {\mytwolines{CS}{chicken soup}}
    \node[fillcolor=LightYellow](PS)(112.5,30)
         {\mytwolines{PS}{pea soup}}
    \node[fillcolor=LightYellow](TS)(137.5,30)
         {\mytwolines{TS}{tomato soup}}
    \node[Nadjust=n,Nw=4.5,Nh=3.4,Nmr=3.4](Xo)(107.25,72.25){\small 10}
    \node[Nadjust=n,Nw=3.4,Nh=3.4,Nmr=3](SCo)(82.75,72.25){\small 5}
    \node[Nadjust=n,Nw=3.4,Nh=3.4,Nmr=3](Ro)(32.75,72.25){\small 5}
    \node[Nadjust=n,Nw=3.4,Nh=3.4,Nmr=3](Eo)(-5.25,52.25){\small 5}
    \node[Nadjust=n,Nw=3.4,Nh=3.4,Nmr=3](Do)(20.25,52.25){\small 5}
    \node[Nadjust=n,Nw=3.4,Nh=3.4,Nmr=3](Wo)(70.25,52.25){\small 5}
    \node[Nadjust=n,Nw=3.4,Nh=3.4,Nmr=3](Uo)(95.25,52.25){\small 3}
    \node[Nadjust=n,Nw=3.4,Nh=3.4,Nmr=3](Po)(20.25,32.25){\small 7}
    \node[Nadjust=n,Nw=3.4,Nh=3.4,Nmr=3](Co)(45.25,32.25){\small 5}
    \node[Nadjust=n,Nw=3.4,Nh=3.4,Nmr=3](To)(70.25,32.25){\small 3}
    \node[Nadjust=n,Nw=3.4,Nh=3.4,Nmr=3](CSo)(95.25,32.25){\small 2}
    \node[Nadjust=n,Nw=3.4,Nh=3.4,Nmr=3](PSo)(120.25,32.25){\small 2}
    \node[Nadjust=n,Nw=3.4,Nh=3.4,Nmr=3](TSo)(145.25,32.25){\small 2}
    \node[Nh=5,Nw=30](costs)(133,92){\small maximum cost 35}
    \node[Nframe=n,Nw=0,Nh=0](Mx)(50,85.5){}
    \node[Nmr=1,Nw=2,Nh=2,linewidth=0.25,fillcolor=Black](manO)(0,75.7){}
    \node[Nmr=1,Nw=2,Nh=2,linewidth=0.25,fillcolor=Black](manBC)(50,75.7){}
    \node[Nmr=1,Nw=2,Nh=2,linewidth=0.25](optR)(25,75.7){}
    \node[Nmr=1,Nw=2,Nh=2,linewidth=0.25](optSC)(75,75.7){}
    \node[Nmr=1,Nw=2,Nh=2,linewidth=0.25](optX)(100,75.7){}
    \edge(Mx,manO){}
    \edge(Mx,manBC){}
    \edge(Mx,optR){}
    \edge(Mx,optSC){}
    \edge(Mx,optX){}
    \node[Nframe=n,Nw=0,Nh=0](Ox)(0,65.5){}
    \node[Nframe=n,Nw=0,Nh=0](altE)(-12.5,54.5){}
    \node[Nframe=n,Nw=0,Nh=0](altD)(12.5,54.5){}
    \edge(Ox,altE){}
    \edge(Ox,altD){}
    \node[Nframe=n,Nw=0,Nh=0](Oleft)(-2.5,63.5){} %% -2.5,-2.5
    \node[Nframe=n,Nw=0,Nh=0](Oright)(2.5,63.5){} %% +2.5,-2.5
    \edge[curvedepth=-1](Oleft,Oright){}
    \node[Nframe=n,Nw=0,Nh=0](BCx)(50,63.5){}
    \node[Nmr=1,Nw=2,Nh=2,linewidth=0.25,fillcolor=Black](manW)(37.5,55.7){}
    \node[Nmr=1,Nw=2,Nh=2,linewidth=0.25,fillcolor=Black](manB)(62.5,55.7){}
    \edge(BCx,manW){}
    \edge(BCx,manB){}
    \node[Nframe=n,Nw=0,Nh=0](Bx)(37.5,45.5){}
    \node[Nmr=1,Nw=2,Nh=2,linewidth=0.25](optP)(12.5,35.7){}
    \node[Nmr=1,Nw=2,Nh=2,linewidth=0.25,fillcolor=Black](manC)(37.5,35.7){}
    \node[Nmr=1,Nw=2,Nh=2,linewidth=0.25](optT)(62.5,35.7){}
    \edge(Bx,optP){}
    \edge(Bx,manC){}
    \edge(Bx,optT){}
    \node[Nframe=n,Nw=0,Nh=0](Sx)(112.5,45.5){}
    \node[Nframe=n,Nw=0,Nh=0](altCS)(87.5,34.5){}
    \node[Nframe=n,Nw=0,Nh=0](altPS)(112.5,34.5){}
    \node[Nframe=n,Nw=0,Nh=0](altTS)(137.5,34.5){}
    \edge(Sx,altCS){}
    \edge(Sx,altPS){}
    \edge(Sx,altTS){}
    \node[Nframe=n,Nw=0,Nh=0](Sleft)(107.5,43.5){} %% -2.5,-2.5
    \node[Nframe=n,Nw=0,Nh=0](Sright)(117.5,43.5){} %% +2.5,-2.5
    \drawbpedge[linewidth=0.5](Sleft,-66,2.65,Sright,-114,2.65){}  
    \drawbpedge[dash={1}0,AHnb=1,ATnb=1,sxo=-2.5,exo=-2.5](D,-150,10,P,150,10){}
    \drawbpedge[dash={1}0,AHnb=1,sxo=+2.5,exo=-5.3](D,+75,12.5,U,+105,12.5){}
    \drawbpedge[dash={1}0,sxo=+2.5,exo=-5.6,eyo=+2.8](SC,-165,15,U,+165,10){}
    \drawbpedge[dash={1}0,ATnb=1,sxo=-2.5,syo=-2,exo=-3.5](R,-180,22.5,P,155,20){}
  \end{graph}
  }
  \vspace*{-2.25cm}
  \caption{\label{fig:FD}Feature model of family of coffee vending machines}
\end{figure}

\noindent
In the attributed feature model, mandatory (core) features are marked by a
closed bullet, optional features by an open one.
Exactly one of the features~$E$ and~$D$ is selected, while at least one of the features $CS$, $PS$ and~$TS$ is selected.
As to cross-tree constraints, 
features $P$ and~$D$ exclude each other, feature~$P$
requires feature~$R$, and the simultaneous selection of features~$D$ and~$SC$ requires feature~$U$. 
The value of the cost attribute of the concrete features is put inside a small circle (i.e.\ $\mathit{cost}(X) = 10$).
Finally, as an additional constraint, we require that the total costs
of all selected features does not exceed the threshold~$35$.

\begin{figure}[h!]
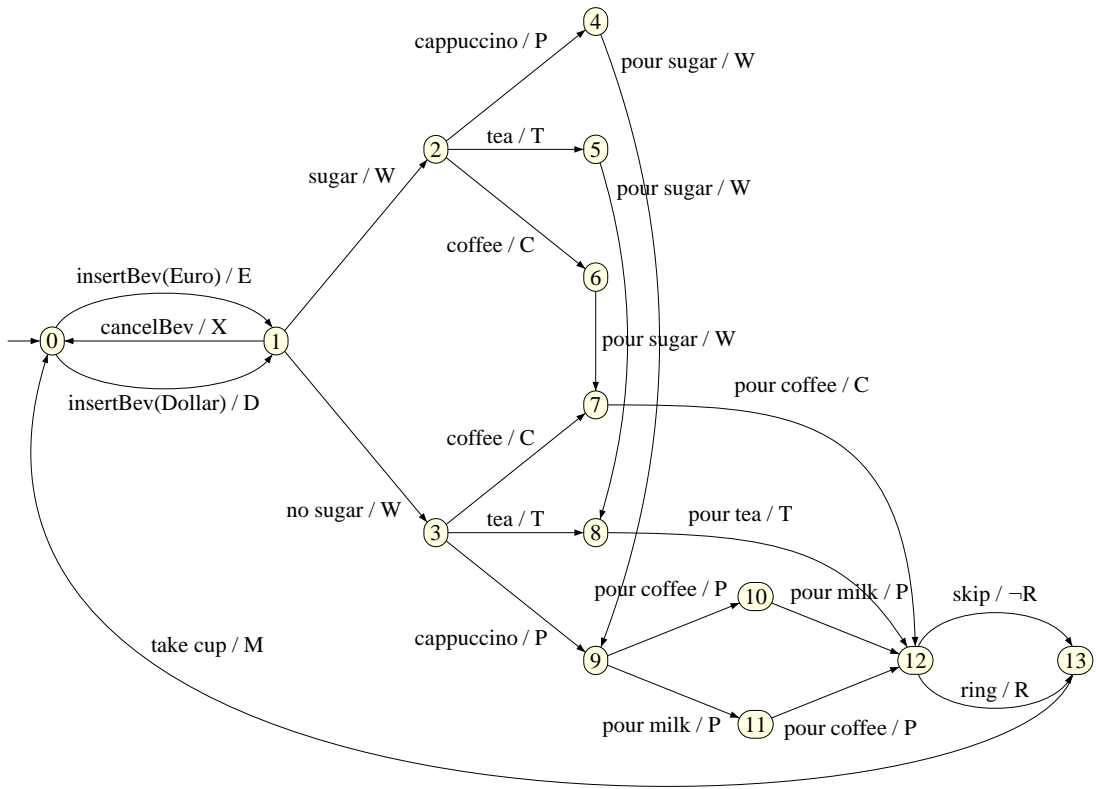

  \centering
  \vspace*{-0.5cm}
  \small
  \scalebox{0.85}{%
    \begin{digraph}(170,125)(-5,25)
    %\put(-5,25){\framebox(170,120){}}
      \graphset{iangle=180,fangle=270} 
      		\node[fillcolor=LightYellow](0)(0,90){$0$}
      		\node[fillcolor=LightYellow](1)(35,90){$1$}
      		\node[fillcolor=LightYellow](2)(60,120){$2$}
      		\node[fillcolor=LightYellow](3)(60,60){$3$}
      		\node[fillcolor=LightYellow](4)(85,140){$4$}
      		\node[fillcolor=LightYellow](5)(85,120){$5$}
      		\node[fillcolor=LightYellow](6)(85,100){$6$}
      		\node[fillcolor=LightYellow](7)(85,80){$7$}
      		\node[fillcolor=LightYellow](8)(85,60){$8$}
      		\node[fillcolor=LightYellow](9)(85,40){$9$}
      		\node[fillcolor=LightYellow](10)(110,50){$10$}
      		\node[fillcolor=LightYellow](11)(110,30){$11$}
      		\node[fillcolor=LightYellow](12)(135,40){$12$}
      		\node[fillcolor=LightYellow](13)(160,40){$13$}
      		\imark(0)
      		\drawbpedge[ELside=l](0,90,10,1,90,10){insertBev(Euro) / E}
      		\drawbpedge[ELside=r](0,270,10,1,270,10){insertBev(Dollar) / D}
      		\edge[ELside=r](1,0){cancelBev / X}
      		
      		\edge[ELside=l, ELpos=70](1,2){sugar / W}
      		\edge[ELside=l, ELpos=50](2,4){cappuccino / P}
      		\edge[ELside=l, ELpos=50](2,5){tea / T}
      		\edge[ELside=r, ELpos=50](2,6){coffee / C}
      		
      		\edge[ELside=l, ELpos=10, curvedepth=10](4,9){pour sugar / W}
      		\edge[ELside=l, ELpos=15, curvedepth=5](5,8){pour sugar / W}
      		\edge[ELside=l, ELpos=50](6,7){pour sugar / W}
      		
      		\edge[ELside=r, ELpos=70](1,3){no sugar / W}
      		\edge[ELside=r, ELpos=50](3,9){cappuccino / P}
      		\edge[ELside=l, ELpos=50](3,8){tea / T}
      		\edge[ELside=l, ELpos=50](3,7){coffee / C}
      		
      		\drawbpedge[ELside=l, ELpos=40](7,0,30,12,90,40){pour coffee / C}
      		\drawbpedge[ELside=l, ELpos=30](8,0,30,12,120,20){pour tea / T}
      		\edge[ELside=l](9,10){pour coffee / P}
      		\edge[ELside=l](10,12){pour milk / P}
      		\edge[ELside=r](11,12){pour coffee / P}
      		\edge[ELside=r](9,11){pour milk / P}
      		
      		\drawbpedge[ELside=l](12,90,10,13,90,10){skip / $\neg$R}
      		\drawbpedge[ELside=l](12,270,10,13,270,10){ring / R}
      		
      		\drawbpedge[ELside=r, ELpos=70](13,270,30,0,250,90){take cup / M}
    \end{digraph}
}
\vspace*{0.25cm}
\caption{\label{fig:drink}FTS of beverage component}
\end{figure}
 
\newpage
\noindent 
The FTS of the beverage component contains 
$14$~states and $23$~transitions and that of the 
soup component contains $13$~states and~$28$ transitions, 
for a total of~$182$ states and $691$~transitions in parallel 
composition.

\begin{figure}[hb!]
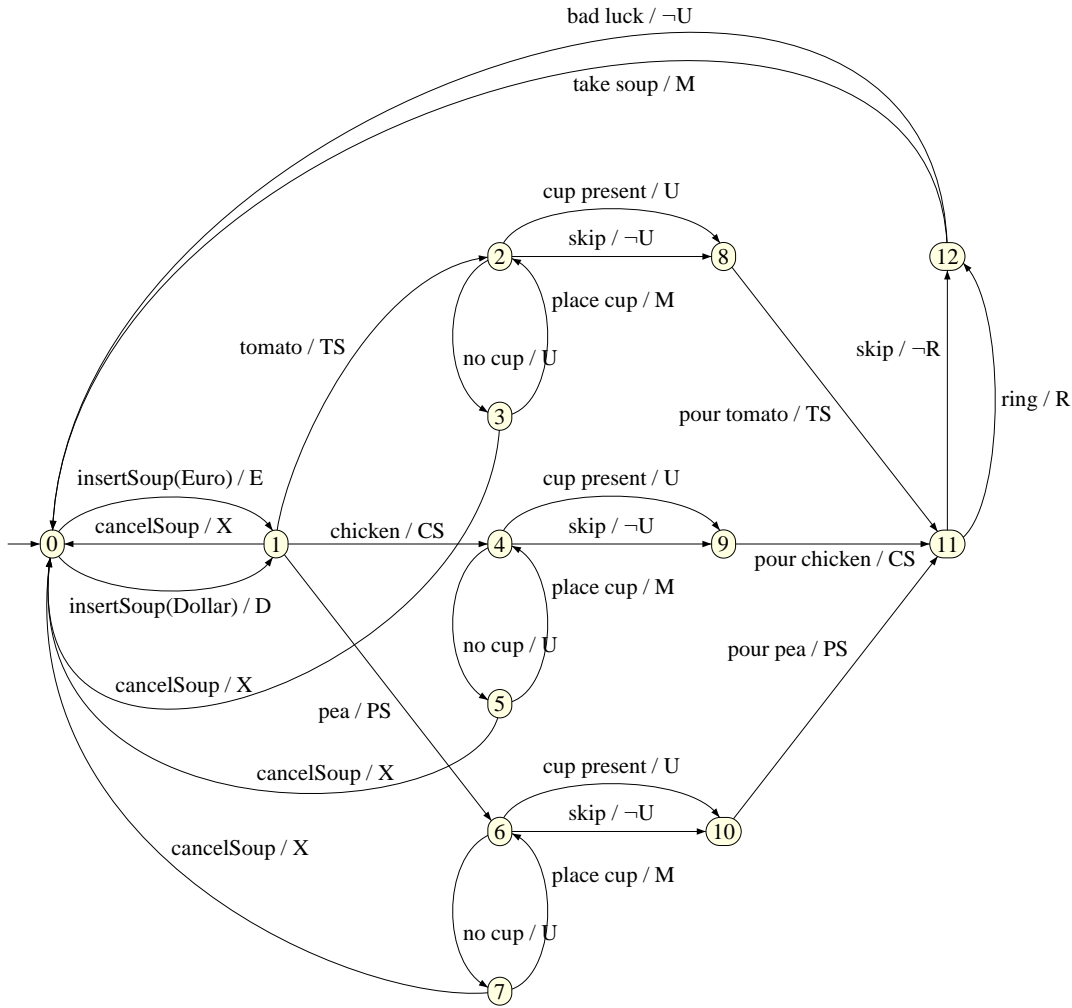

  \centering
  \vspace*{2.5cm}
  \small
  \scalebox{0.85}{%
    \begin{digraph}(170,120)(-5,25)
    %\put(5,25){\framebox(160,120){}}
      \graphset{iangle=180,fangle=270} 
      \node[fillcolor=LightYellow](0)(0,90){$0$}
      \node[fillcolor=LightYellow](1)(35,90){$1$}
      \node[fillcolor=LightYellow](2)(70,135){$2$}
      \node[fillcolor=LightYellow](3)(70,110){$3$}
      \node[fillcolor=LightYellow](4)(70,90){$4$}
      \node[fillcolor=LightYellow](5)(70,65){$5$}
      \node[fillcolor=LightYellow](6)(70,45){$6$}
      \node[fillcolor=LightYellow](7)(70,20){$7$}
      \node[fillcolor=LightYellow](8)(105,135){$8$}
      \node[fillcolor=LightYellow](9)(105,90){$9$}
      \node[fillcolor=LightYellow](10)(105,45){$10$}
      \node[fillcolor=LightYellow](11)(140,90){$11$}
      \node[fillcolor=LightYellow](12)(140,135){$12$}
      \imark(0)
      \drawbpedge[ELside=l](0,75,10,1,90,10){insertSoup(Euro) / E}
      \drawbpedge[ELside=r](0,285,10,1,270,10){insertSoup(Dollar) / D}
      \edge[ELside=r](1,0){cancelSoup / X}
      
      \drawbpedge[ELside=l](1,90,10,2,180,20){tomato / TS}
      %\edge[ELside=l, ELpos=50](1,2){tomato / TS}
      \edge[ELside=l, ELpos=50](1,4){chicken / CS}
      \edge[ELside=r, ELpos=50](1,6){pea / PS}
      
      \drawbpedge[ELside=l](2,90,10,8,90,10){cup present / U}
      \edge(2,8){skip / $\neg$U}
      \drawbpedge[ELside=l, ELpos=70](2,190,10,3,170,10){no cup / U}
      \drawbpedge[ELside=r, ELpos=60](3,0,10,2,0,10){place cup / M}
      \drawbpedge[ELside=r, ELpos=60](3,270,40,0,260,50){cancelSoup / X}
      \edge[ELside=r, ELpos=40](8,11){pour tomato / TS}
      
      \drawbpedge[ELside=l](4,90,10,9,90,10){cup present / U}
      \edge(4,9){skip / $\neg$U}
      \drawbpedge[ELside=l, ELpos=70](4,190,10,5,170,10){no cup / U}
      \drawbpedge[ELside=r, ELpos=60](5,0,10,4,0,10){place cup / M}
      \drawbpedge[ELside=r, ELpos=40](5,270,20,0,260,50){cancelSoup / X}
      \edge[ELside=r, ELpos=50](9,11){pour chicken / CS}
      
      \drawbpedge[ELside=l](6,90,10,10,90,10){cup present / U}
      \edge(6,10){skip / $\neg$U}
      \drawbpedge[ELside=l, ELpos=70](6,190,10,7,170,10){no cup / U}
      \drawbpedge[ELside=r, ELpos=60](7,0,10,6,0,10){place cup / M}
      \drawbpedge[ELside=r, ELpos=50](7,190,20,0,260,50){cancelSoup / X}
      \edge[ELside=l, ELpos=50](10,11){pour pea / PS}
      		
      \drawbpedge[ELside=l,ELpos=65](11,90,10,12,270,10){skip / $\neg$R}
      \drawbpedge[ELside=r](11,0,10,12,0,10){ring / R}
      \drawbpedge[ELside=l,ELpos=40](12,90,60,0,90,70){take soup / M}
      \drawbpedge[ELside=r,ELpos=40](12,90,70,0,90,70){bad luck / $\neg$U}
    \end{digraph}
}
\vspace*{0.5cm}
\caption{\label{fig:soup}FTS of soup component}
\end{figure}

As reported in Section~\ref{sec-experiments}, we used the 
mCRL2 toolset to verify~$12$ properties against this SPL\@. 
These properties are listed next, together with their 
formalization in the mCRL2 variant of the modal $\mu$-calculus.
%variant of the modal $\mu$-calculus used in mCRL2~\cite{GM98}.
%
\begin{enumerate}\itemsep=-1pt\parsep=-1pt
\item[1.] If a coffee is ordered, then eventually coffee is poured:
$[\,\textit{true}*.\,\textit{coffee}\,]\,(\text{mu}\,X.\,[\,!\,\textit{pour\_coffee}\,]\,X)$
\item[2.] The SPL is deadlock-free: 
$[\,\textit{true}*\,]\,\langle\textit{true}\rangle\,\textit{true}$
\item[3a.] A machine that accepts Euros does not accept Dollars:\\
$[\textit{true}*.(\textit{insertBev}(\textit{Euro})\mid\mid\textit{insertSoup}(\textit{Euro})).\textit{true}*.(\textit{insertBev}(\textit{Dollar})\mid\mid\textit{insertSoup}(\textit{Dollar}))]\,\textit{false}$
\item[3b.] A machine that accepts Dollars does not accept Euros:\\
$[\textit{true}*.(\textit{insertBev}(\textit{Dollar})\mid\mid\textit{insertSoup}(\textit{Dollar})).\textit{true}*.(\textit{insertBev}(\textit{Euro})\mid\mid\textit{insertSoup}(\textit{Euro}))]\,\textit{false}$
\item[4a.] A cup can only be taken out of the beverage component after a beverage was ordered:\\ 
$[\,(!\,\textit{coffee}\ \&\&\ !\,\textit{tea}\ \&\&\ !\,\textit{cappuccino})*.\,\textit{take\_cup}\,]\ \textit{false}$
\item[4b.] A cup can only be taken out of the soup component after soup was ordered:\\
$[\,(!\,\textit{tomato}\ \&\&\ !\,\textit{chicken}\ \&\&\ !\,\textit{pea})*.\,\textit{take\_soup}]\ \textit{false}$
\item[5a.] If a beverage is ordered, then eventually the beverage is canceled or a cup is taken out of the beverage component: 
$[\,\textit{true}*.\,(\,\textit{coffee}\,\mid\mid\,\textit{tea}\,\mid\mid\,\textit{cappuccino})\,]\,(\text{mu}\,X.\,[\,(!\,\textit{cancelBev}\ \&\&\ !\,\textit{take\_cup})\,]\,X)$
\item[5b.] If soup is ordered, then eventually the soup is canceled, a cup is taken out of the soup component or the customer has bad luck:\\ 
$[\,\textit{true}*.\,(\textit{tomato}\,\mid\mid\,\textit{chicken}\,\mid\mid\,\textit{pea})\,]\,(\text{mu}\,X.\,[\,(!\,\textit{cancelSoup}\ \&\&\ !\,\textit{take\_soup}\ \&\&\ !\,\textit{bad\_luck})\,]\,X)$
\item[6.] If the machine has a soup component, then a beverage can be ordered without inserting more money after soup was ordered: 
$[\,\textit{true}*.\,(\textit{insertSoup}(\textit{Euro})\mid\mid\textit{insertSoup}(\textit{Dollar}))\,]\,\langle\textit{true}*.\,(\textit{tomato}\,\mid\mid \textit{chicken}\,\mid\mid\,\textit{pea}).\,(!\,\textit{insertBev}(\textit{Euro})\ \&\&\ !\,\textit{insertBev}(\textit{Dollar}))*.\,(\textit{coffee}\,\mid\mid\,\textit{tea}\,\mid\mid\,\textit{cappuccino})\rangle\,\textit{true}$
\item[7a.] A beverage cannot be ordered without inserting more money if a previous beverage order is still pending: 
$[\textit{true}*\!.(\textit{coffee}\mid\mid\textit{tea}\mid\mid\textit{cappuccino}).(!\textit{insertBev}(\textit{Dollar})\ \&\&\,!\textit{insertBev}(\textit{Euro}))*.(\textit{coffee}\mid\mid\textit{tea}\mid\mid\textit{cappuccino})]\ \textit{false}$
\item[7b.] Soup cannot be ordered without inserting more money if a soup order is pending: $[\,\textit{true}*.\,(\textit{tomato}\,\mid\mid 
\textit{chicken}\,\mid\mid\,\textit{pea}).\ (!\,\textit{insertSoup}(\textit{Dollar})\ \&\&\ !\,\textit{insertSoup}(\textit{Euro}))*.\ (\textit{tomato}\,\mid\mid\,\textit{chicken}\,\mid\mid\,\textit{pea})\,]\ \,\textit{false}$
\item[8.] In\,a\,machine\,with\,cup\,detection,\,soup\,can\,only\,be\,poured\,after\,detecting\,a\,cup:\,$[\,\textit{true}*.\,\textit{cup\_present}\,]$\\
$[\textit{true}*.\,(\textit{take\_soup}\mid\mid\textit{bad\_luck}).\,(!\,\textit{cup\_present})*.\,(\textit{pour\_tomato}\mid\mid\textit{pour\_chicken}\mid\mid\textit{pour\_pea})]\ \textit{false}$
\end{enumerate}

\vspace*{2.75cm}

\end{document}